\begin{document}

\title{Secrecy Wireless Information and Power Transfer with MISO Beamforming}

\author{Liang Liu, Rui Zhang,~\IEEEmembership{Member,~IEEE}, and Kee-Chaing Chua,~\IEEEmembership{Member,~IEEE}

\thanks{Manuscript received July 23, 2013, revised November 9, 2013 and January 17, 2014, accepted January 21, 2014. The associate editor
coordinating the review of this paper and approving it for
publication was Dr. De Maio Antonio. The paper has been presented in part at IEEE Global Communications Conference (Globecom), Atlanta, GA, December, 2013. The
work was supported in part by the National University of Singapore
under Research Grant R-263-000-679-133.}

\thanks{The authors are with the
Department of Electrical and Computer Engineering, National
University of Singapore (e-mail:liu\_liang@nus.edu.sg; elezhang@nus.edu.sg; eleckc@nus.edu.sg). R. Zhang is also
with the Institute for Infocomm Research, A*STAR, Singapore.}
}

\maketitle

\begin{abstract}

The dual use of radio signal for simultaneous wireless information and power transfer (SWIPT) has recently drawn significant attention. To meet the practical requirement that the energy receiver (ER) operates with significantly higher received power as compared to the conventional information receiver (IR), ERs need to be deployed in more proximity to the transmitter than IRs in the SWIPT system. However, due to the broadcast nature of wireless channels, one critical issue arises that the messages sent to IRs can be eavesdropped by ERs, which possess better channels from the transmitter. In this paper, we address this new physical-layer security problem in a multiuser multiple-input single-output (MISO) SWIPT system where one multi-antenna transmitter sends information and energy simultaneously to an IR and multiple ERs, each with one single antenna. Two problems are investigated with different practical aims: the first problem maximizes the secrecy rate for the IR subject to individual harvested energy constraints of ERs, while the second problem maximizes the weighted sum-energy transferred to ERs subject to a secrecy rate constraint for IR. We solve these two non-convex problems optimally by a general two-stage procedure. First, by fixing the signal-to-interference-plus-noise ratio (SINR) target for ERs (in the first problem) or IR (in the second problem), we obtain the optimal transmit beamforming and power allocation solution by applying the technique of semidefinite relaxation (SDR). Then, each of the two problems is solved by a one-dimension search over the optimal SINR target for ERs or IR. Furthermore, for each problem, suboptimal solutions of lower complexity are proposed in which information and energy beamforming vectors are separately designed from their power allocation. Simulation results are provided to compare the performances of proposed optimal and suboptimal designs in terms of (secrecy) rate and energy transmission trade-off between the IR and ERs.

\end{abstract}

\begin{IEEEkeywords}
Simultaneous wireless information and power transfer (SWIPT), physical layer security, wireless power, energy harvesting, beamforming, power control, artificial noise, semidefinite relaxation (SDR).
\end{IEEEkeywords}

\setlength{\baselineskip}{1.0\baselineskip}
\newtheorem{definition}{\underline{Definition}}[section]
\newtheorem{fact}{Fact}
\newtheorem{assumption}{Assumption}
\newtheorem{theorem}{\underline{Theorem}}[section]
\newtheorem{lemma}{\underline{Lemma}}[section]
\newtheorem{corollary}{\underline{Corollary}}[section]
\newtheorem{proposition}{\underline{Proposition}}[section]
\newtheorem{example}{\underline{Example}}[section]
\newtheorem{remark}{\underline{Remark}}[section]
\newtheorem{algorithm}{\underline{Algorithm}}[section]
\newcommand{\mv}[1]{\mbox{\boldmath{$ #1 $}}}

\section{Introduction}\label{eqn:Introduction}

\begin{figure}
\begin{center}
 \scalebox{0.45}{\includegraphics*{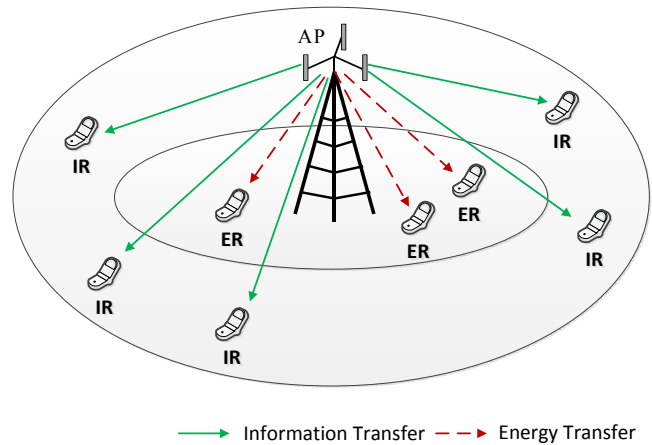}}
 \end{center}
\caption{A simultaneous wireless information and power transfer (SWIPT) system with ``near'' energy receivers (ERs) and ``far'' information receivers (IRs) from the AP.} \label{fig1}\vspace{-20pt}
\end{figure}

\PARstart{R}ecently, there has been an upsurge of interest in radio signal enabled simultaneous wireless information and power transfer (SWIPT) \cite{Sahai10}--\cite{Clerckx13}. A typical SWIPT system of practical interest is shown in Fig. \ref{fig1}, where a fixed access point (AP) with constant power supply broadcasts wireless signal to a set of distributed user terminals (UTs), among which some intend to decode information from the received signal, referred to as information receivers (IRs), while the others are interested in harvesting the signal energy, thus called energy receivers (ERs) \cite{Rui11}, \cite{Rui13}. To meet the practical requirement that the IR and ER typically operate with very different power sensitivity (e.g., $-60$dBm for IR versus $-10$dBm for ER), a ``near-far'' or receiver-location based scheduling for information and energy transmissions has been proposed in \cite{Rui11}, \cite{Rui13}, where ERs need to be deployed in more proximity to the AP than IRs for receiving higher signal power due to distance-dependent attenuation.

However, the receiver-location based transmission scheduling for SWIPT gives rise to a new information security issue since ERs, which are closer to the transmitter and thus have better channels than IRs, can more easily eavesdrop the information sent to IRs. In order to advance existing secrecy wireless communication network to a future one with hybrid information and energy transmission, it is thus an essential task to overcome this challenging security problem in SWIPT. Therefore, in addition to meeting the energy harvesting requirements of the ERs, the SWIPT system should be optimally designed to guarantee that the information is delivered securely to each IR even in the presence of possible eavesdropping by any of the ERs.

Motivated by the aforementioned problem, in this paper we propose the use of multiple antennas at the AP to achieve secret information transmission to the IRs and yet guarantee the target amount of energy simultaneously transferred to the ERs, by optimally designing the beamforming vectors and their power allocation at the transmitter. For the purpose of exposition, we consider a multiple-input single-output (MISO) SWIPT system consisting of one multi-antenna transmitter, one single-antenna IR, and $K\geq 1$ single-antenna ERs, as shown in Fig. \ref{fig2}. Two secrecy beamforming design problems, denoted by (P1) and (P2), are investigated for the MISO SWIPT system based on different performance requirements of IRs and ERs in practice. In (P1), we study the joint information and energy transmit beamforming design for the scenario where ERs have stringent energy harvesting requirements by maximizing the secrecy rate of the IR subject to individual harvested energy constraints of ERs. In contrast, for (P2), we investigate the case where the IR has a fixed-rate transmission and thus the weighted sum-energy transferred to ERs is maximized subject to a secrecy rate constraint for the IR. Both problems (P1) and (P2) are shown to be non-convex in general; however, we solve them globally optimally by reformulating each problem into two subproblems as follows. First, by fixing the signal-to-interference-plus-noise ratio (SINR) target at all ERs in (P1) or at the IR in (P2), we obtain the optimal beamforming and power allocation solutions by applying the technique of semidefinite relaxation (SDR) \cite{Luo10}. We show that there always exists a rank-one optimal transmit covariance solution for the IR in both (P1) and (P2), i.e., transmit beamforming is optimal for the IR; furthermore, if the SDR results in a higher-rank solution, we propose one efficient algorithm to construct an equivalent rank-one optimal solution. We also present the condition under which SDR always yields a rank-one optimal transmit covariance solution for the IR in (P1) or (P2), and show that this condition is practically satisfied in our studied MISO SWIPT system. Next, we implement a one-dimension search for the optimal SINR target of ERs in (P1) or of the IR in (P2) to optimally solve these two problems. Finally, we present two suboptimal solutions of lower complexity for each of the two studied problems, in which beamforming vectors are separately designed from their power allocation, and compare their performances to the optimal solutions in terms of achievable (secrecy) rate-energy transmission trade-off in MISO SWIPT systems.

\begin{figure}
\begin{center}
 \scalebox{0.5}{\includegraphics*{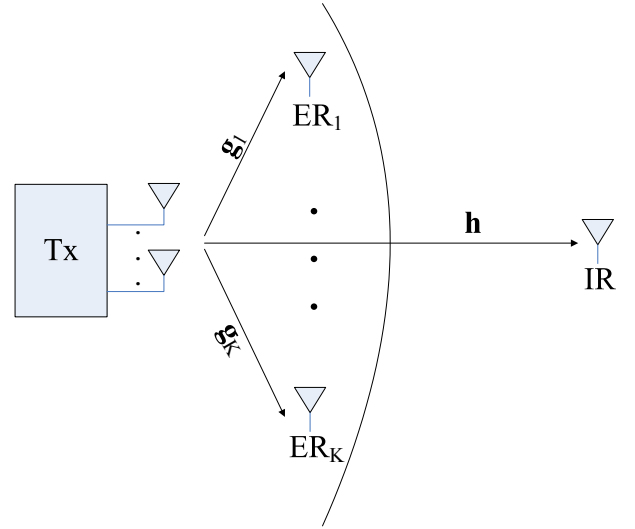}}
 \end{center}
\caption{A MISO SWIPT system with $K$ ``near'' ERs and one ``far'' IR.} \label{fig2}\vspace{-20pt}
\end{figure}

It is worth noting that in \cite{Rui13}, a similar MISO SWIPT system as in this paper has been studied, but under the assumption of independent user channels and without the secret information transmission constraint. It was shown in \cite{Rui13} that to maximize the weighted sum-energy transferred to ERs while meeting individual SINR constraints at IRs, the optimal strategy is to adjust information beams only without the use of any energy beam. However, in this paper we will show that with the newly introduced secrecy rate constraint at the IR, energy beams are in general needed in the optimal solution for the SWIPT system with arbitrary user channels. The reason is that energy beams in our new setup play an additional important role of artificial noise (AN) \cite{Goel06} to facilitate the secret information transmission to the IR by deliberately interfering with ERs that may eavesdrop the IR's message. It is also noted that AN-aided secrecy communication has been extensively studied in the literature (see e.g. \cite{Goel06}--\cite{Ma11} and the references therein), where a fraction of the transmit power is allocated to send artificially generated noise signals to reduce the amount of information that can be decoded by the eavesdroppers. Since in practice eavesdroppers' channels are in general unknown at the transmitter \cite{Rui09}, an isotropic approach was proposed in \cite{Goel06} where the power of AN is uniformly distributed in the null space of the legitimate receiver's channel, and the performance of this practical approach has been shown to be nearly optimal at the high signal-to-noise ratio (SNR) regime \cite{Khisti07}. With imperfect knowledge of eavesdroppers' channels at the transmitter, various AN-aided secrecy transmission schemes have also been proposed for different channel setups \cite{Swindlehurst12}--\cite{Mckay10}. Furthermore, the MISO beamforming design problem for the AN-aided secrecy transmission under the assumption that eavesdroppers' channels are perfectly known at the transmitter has been studied in e.g., \cite{MaKent11} and \cite{Ma11}. Note that this assumption may not be valid if the eavesdroppers are passive devices in practical systems. However, in the SWIPT system of our interest, since ERs need to assist the transmitter in obtaining their channel knowledge to design transmit beamforming to satisfy their individual energy requirements, it is more practically reasonable to assume that ERs' channels are known at the transmitter as in this paper.

It is worth pointing out further that besides the physical layer security approach, cryptography technique that is implemented in the higher protocol layer is an alternative solution for achieving secrecy information transmission (see e.g. \cite{Zhang07} and the references therein). In this paper, we focus our study on the physical layer security in SWIPT due to the following reasons. Firstly, cryptography requires a common private key shared by the transmitter and the legitimate receiver, which needs an additional secure communication channel for the key exchange. Secondly, the dual use of AN for both secrecy information and power transfer is an interesting new idea proposed in this paper which is worth investigating. Last but not least, as aforementioned, different from conventional secrecy communication, the channels of possible eavesdroppers, i.e., ERs, can be practically known at the transmitter in SWIPT systems, which motivates new transmission designs for physical layer security.

The rest of this paper is organized as follows. Section \ref{sec:system model} introduces the MISO SWIPT system model. Section \ref{sec:Problem Formulation} presents the formulations of the secrecy rate maximization problem subject to given harvested energy constraints as well as the weighted sum harvested energy maximization problem subject to a given secrecy rate constraint. Sections \ref{sec:Proposed Solutions to Problem (P1)} and \ref{sec:Proposed Solutions to Problem (P2)} present the optimal and suboptimal solutions for the two formulated problems, respectively. Section \ref{sec:Numerical Results} provides numerical results to compare the performances of various proposed schemes. Finally, Section \ref{sec:Conclusion} concludes the paper.

{\it Notation}: Scalars are denoted by lower-case letters, vectors
by bold-face lower-case letters, and matrices by
bold-face upper-case letters. $\mv{I}$ and $\mv{0}$  denote an
identity matrix and an all-zero matrix, respectively, with
appropriate dimensions. For a square matrix $\mv{S}$, ${\rm Tr}(\mv{S})$
denotes the trace of $\mv{S}$; $\mv{S}\succeq\mv{0}$ ($\mv{S}\preceq \mv{0}$) and $\mv{S}\succ \mv{0}$ ($\mv{S}\prec \mv{0}$) mean that $\mv{S}$ is positive (negative) semi-definite and positive (negative) definite, respectively. For a matrix
$\mv{M}$ of arbitrary size, $\mv{M}^{H}$ and ${\rm rank}(\mv{M})$ denote the
conjugate transpose and rank of $\mv{M}$, respectively. $E[\cdot]$ denotes the statistical expectation. The
distribution of a circularly symmetric complex Gaussian (CSCG) random vector with mean vector $\mv{x}$ and
covariance matrix $\mv{\Sigma}$ is denoted by
$\mathcal{CN}(\mv{x},\mv{\Sigma})$; and $\sim$ stands for
``distributed as''. $\mathbb{C}^{x \times y}$ denotes the space of
$x\times y$ complex matrices. $\|\mv{x}\|$ denotes the Euclidean norm of a complex vector
$\mv{x}$.

\section{System Model}\label{sec:system model}

We consider a multiuser MISO downlink system for SWIPT over a given frequency band as shown in Fig. \ref{fig2}. It is assumed that there is one single IR, and $K$ ERs denoted by the set $\mathcal{K}_{{\rm EH}}=\{{\rm ER}_1,\cdots,{\rm ER}_K\}$. The ERs are assumed to be closer to the transmitter (Tx) than the IR in order to harvest enough energy from the Tx. Suppose that Tx is equipped with $M>1$ antennas, while each IR/ER is equipped with one single antenna. We assume linear transmit beamforming at Tx, where the IR is assigned with one dedicated information beam, while the $K$ ERs are in total assigned with $d\leq M$ energy beams without loss of generality. Therefore, the complex baseband transmitted signal of Tx can be expressed as
\begin{align}\label{eqn:signal3}
\mv{x}=\mv{v}_0s_0+\sum\limits_{i=1}^d\mv{w}_i s_i,
\end{align}where $\mv{v}_0\in \mathbb{C}^{M\times 1}$ and $\mv{w}_i\in \mathbb{C}^{M\times 1}$ denote the information beamforming vector and the $i$th energy beamforming vector, $1\leq i \leq d$, respectively; $s_0$ denotes the transmitted signal for the IR, while $s_i$'s, $i=1,\cdots,d$, denote the energy-carrying signals for energy beams. It is assumed that $s_0$ is a CSCG random variable with zero mean and unit variance, denoted by $s_0\sim \mathcal{CN}(0,1)$. Furthermore, $s_i$'s, $1\leq i \leq d$, in general can be arbitrary independent random signals each with unit average power. Since in this paper we consider secret information transmission to the IR, the energy signals $s_i$, $1\leq i \leq d$, also play the role of AN to reduce the information rate eavesdropped by ERs \cite{Goel06}. As a result, similarly as \cite{Goel06}, we assume that $s_i$'s are independent and identically distributed (i.i.d.) CSCG random variables denoted by $s_i\sim \mathcal{CN}(0,1)$, $\forall i$, since the worst-case noise distribution for the eavesdropping ERs is known to be Gaussian. Suppose that Tx has a transmit sum-power constraint $\bar{P}$; from (\ref{eqn:signal3}), we thus have $E[\mv{x}^H\mv{x}]=\|\mv{v}_0\|^2+\sum_{i=1}^d\|\mv{w}_i\|^2\leq \bar{P}$.

In this paper, we assume a quasi-static fading environment and for convenience denote $\mv{h}\in \mathbb{C}^{M\times 1}$ and $\mv{g}_k\in \mathbb{C}^{M\times 1}$ as the conjugated complex channel vectors from Tx to IR and ${\rm ER}_k$, $k=1,\cdots,K$, respectively, where $\mv{h}$ and $\mv{g}_k$'s are assumed to be linearly independent. Note that in the case of $K>M$, linear independence in this paper implies that for any $M\times M$ matrix $\tilde{\mv{H}}$, in which the $M$ row vectors constitute any subset of channel vectors from $\mv{h}^H$ and $\mv{g}_k^H$'s, we have ${\rm rank}(\tilde{\mv{H}})=M$. Furthermore, let $\rho_h^2=\|\mv{h}\|^2/M$ and $\rho_{g_k}^2=\|\mv{g}_k\|^2/M$ denote the average per-antenna power of the IR's and ${\rm ER}_{k}$'s channels, respectively; then it is assumed that $\rho_{g_k}^2> \rho_h^2$, $\forall k$ , to be consistent with the receiver-location based transmission scheduling (cf. Fig. \ref{fig2}). It is further assumed that $\mv{h}$ and $\mv{g}_k$'s are perfectly known at Tx. The signal received at IR is then expressed as
\begin{align}\label{eqn:signal1}
\begin{small}y_0=\mv{h}^H\mv{x}+z_0,\end{small}
\end{align}where $z_0\sim \mathcal{CN}(0,\sigma_0^2)$ denotes the additive noise at IR. Furthermore, the signal received at ${\rm ER}_k$ can be expressed as
\begin{align}\label{eqn:signal2}
\begin{small}y_k=\mv{g}_k^H\mv{x}+z_k, ~~~ k=1,\cdots,K,\end{small}
\end{align}where $z_k\sim \mathcal{CN}(0,\sigma_k^2)$ denotes the additive noise at ${\rm ER}_k$. It is assumed that $z_k$'s are independent over $k$.

According to (\ref{eqn:signal1}), the SINR at IR can be expressed as
\begin{align}\label{eqn:IR SINR}
\begin{small}{\rm SINR}_0=\frac{|\mv{v}_0^H\mv{h}|^2}{\sum\limits_{i=1}^d|\mv{w}_i^H\mv{h}|^2+\sigma_0^2}. \end{small}
\end{align}

From (\ref{eqn:signal2}), the SINR at ${\rm ER}_k$ (suppose that it is an eavesdropper to decode the message for the IR instead of harvesting energy) can be expressed as
\begin{align}\label{eqn:ER SINR}
\begin{small}{\rm SINR}_k=\frac{|\mv{v}_0^H\mv{g}_k|^2}{\sum\limits_{i=1}^d|\mv{w}_i^H\mv{g}_k|^2+\sigma_k^2}, ~~~ k=1,\cdots,K. \end{small}
\end{align}The achievable secrecy rate at IR is thus given by \cite{Poor07}:
\begin{align}\label{eqn:secrecy rate}
\begin{small}r_0=\min\limits_{1\leq k\leq K} ~~~ \log_2\left(1+{\rm SINR}_0\right)- \log_2\left(1+{\rm SINR}_k \right). \end{small}
\end{align}Notice that the above achievable rate may be a conservative one in practical SWIPT systems since it is unlikely that all ERs will not harvest energy but instead eavesdrop information for the IR.

On the other hand, for wireless power transfer, due to the broadcast nature of wireless channels, the energy carried by all information and energy beams, i.e., $\mv{v}_0$ and $\mv{w}_i$'s ($1\leq i \leq d$), can all be harvested at each ER. Hence, assuming unit slot duration, the harvested energy of ${\rm ER}_k$ in each slot is given by \cite{Rui11}:
\begin{align}\label{eqn:harvested energy}
\begin{small}E_k=\zeta \left(|\mv{v}_0^H\mv{g}_k|^2+\sum\limits_{i=1}^d|\mv{w}_i^H\mv{g}_k|^2\right), ~~~ 1\leq k \leq K, \end{small}
\end{align}where $0<\zeta\leq 1$ denotes the energy harvesting efficiency.

\section{Problem Formulation}\label{sec:Problem Formulation}

In this paper, two secrecy beamforming design problems are considered as follows. First, we aim to maximize the secrecy rate of the IR subject to individual harvested energy constraints for all ERs. The first design problem is thus given by\begin{align*}\mathrm{(P1)}:~\mathop{\mathtt{Maximize}}_{\mv{v}_0,\{\mv{w}_i\}}
& ~~  r_0 \\
\mathtt {Subject \ to} & ~~ \zeta \left(|\mv{v}_0^H\mv{g}_k|^2+\sum\limits_{i=1}^d|\mv{w}_i^H\mv{g}_k|^2\right)\geq \bar{E}_k, ~ \forall k, \\ & ~~
\|\mv{v}_0\|^2+\sum\limits_{i=1}^d\|\mv{w}_i\|^2 \leq \bar{P},
\end{align*}where $\bar{E}_k$ denotes the harvested energy constraint for ${\rm ER}_k$. Note that (P1) is applicable for the scenario when ERs have strict energy harvesting requirements while the IR only requires an opportunistic information transmission.

Also we are interested in maximizing the weighted sum-energy transferred to ERs subject to a given secrecy rate constraint for IR. Therefore, the problem is formulated as
\begin{align*}\mathrm{(P2)}:~\mathop{\mathtt{Maximize}}_{\mv{v}_0,\{\mv{w}_i\}}
& ~~~  \sum\limits_{k=1}^K \mu_k\zeta \left(|\mv{v}_0^H\mv{g}_k|^2+\sum\limits_{i=1}^d|\mv{w}_i^H\mv{g}_k|^2\right)  \\
\mathtt {Subject \ to} & ~~~ r_0 \geq \bar{r}_0 , \\ & ~~~
\|\mv{v}_0\|^2+\sum\limits_{i=1}^d\|\mv{w}_i\|^2 \leq \bar{P},
\end{align*}where $\mu_k\geq 0$ denotes the energy weight for ${\rm ER}_k$, and $\bar{r}_0$ is the target secrecy rate for IR. (P2) applies for the scenario when the IR has a stringent rate requirement (e.g. delay-limited transmission) but ERs require opportunistic energy harvesting (with different priorities).

Notice that there are two conflicting goals in designing the information beamforming vector $\mv{v}_0$ for both problems (P1) and (P2). On one hand, to maximize the harvested energy at each ER, the power of the received signal at ${\rm ER}_k$ due to the information beam, i.e., $|\mv{v}_0^H\mv{g}_k|^2$, is desired to be as large as possible. However, on the other hand, from the viewpoint of secrecy rate maximization according to (\ref{eqn:IR SINR})--(\ref{eqn:secrecy rate}), it follows that $|\mv{v}_0^H\mv{g}_k|^2$ should be minimized at each ${\rm ER}_k$ to avoid any ``leakage'' information. To resolve the above conflict, we need to properly design the energy beamforming vectors $\mv{w}_i$, $i=1,\cdots,d$, since they not only provide direct wireless energy transfer to ERs, but also play the important role of AN to reduce the ERs' SINR in (\ref{eqn:ER SINR}) for decoding the IR's message.

Since both the secrecy rate $r_0$ for IR given in (\ref{eqn:secrecy rate}) and the harvested energy $E_k$ of ${\rm ER}_k$ given in (\ref{eqn:harvested energy}) are non-concave functions with respect to $\mv{v}_0$ and $\mv{w}_i$'s, problems (P1) and (P2) are both non-convex in general, and thus the strong duality does not apply for them \cite{Boyd04}. As a result, (P1) and (P2) do not have equivalent solutions under Lagrangian duality. In the following two sections, we address the solutions to these two problems, respectively.

\section{Proposed Solutions to Problem (P1)}\label{sec:Proposed Solutions to Problem (P1)}

In this section, we present both optimal and suboptimal solutions to (P1). First, we study the feasibility of this problem by setting $r_0=0$ to (P1) with $\mv{v}_0=\mv{0}$ for given $\bar{E}_k$'s and $\bar{P}$, i.e,\begin{align*}\mathrm{(P1-NoIT)}:\nonumber \\ \mathop{\mathtt{Maximize}}_{\{\mv{w}_i\}}
& ~~~  0 \\
\mathtt {Subject \ to} & ~~~ \zeta \left(\sum\limits_{i=1}^d|\mv{w}_i^H\mv{g}_k|^2\right)\geq \bar{E}_k, ~~~ 1\leq k \leq K, \\ & ~~~
\sum\limits_{i=1}^d\|\mv{w}_i\|^2 \leq \bar{P}.
\end{align*}Note that this problem corresponds to the case when there is no information transmission to IR, and thus $\mv{w}_i$'s play the only role of energy beams. Define $\mv{Q}=\sum_{i=1}^d\mv{w}_i\mv{w}_i^H$. Then we can reformulate (P1-NoIT) as a semidefinite programming (SDP) given by
\begin{align*}\mathrm{(P1-NoIT-SDP)}:\nonumber \\ \mathop{\mathtt{Maximize}}_{\mv{Q}}
& ~~~  0 \\
\mathtt {Subject \ to} & ~~~ \zeta {\rm Tr}(\mv{G}_k\mv{Q}) \geq \bar{E}_k, ~~~ 1\leq k \leq K, \\ & ~~~
{\rm Tr}(\mv{Q}) \leq \bar{P},¡¡\\ &¡¡~~~¡¡\mv{Q} \succeq \mv{0},
\end{align*}where $\mv{G}_k=\mv{g}_k\mv{g}_k^H$. Note that if there exists a feasible solution $\mv{Q}^\ast$ to (P1-NoIT-SDP), then with $d={\rm rank}(\mv{Q}^\ast)$, the energy beams $\mv{w}_i^\ast$, $i=1,\cdots,d$, obtained by the eigenvalue decomposition (EVD) of $\mv{Q}^\ast$, are also feasible to (P1-NoIT). Thereby, the feasibility of (P1) can be checked by solving (P1-NoIT-SDP) via existing software, e.g., CVX \cite{Boyd11}. Without loss of generality, in the rest of this paper, we assume that (P1) is feasible.

Next, we consider the other special case of (P1) when there is no energy transfer requirement, i.e., $\bar{E}_k=0$, $\forall k$. In this case, (P1) reduces to\begin{align*}\mathrm{(P1-NoET)}:~\mathop{\mathtt{Maximize}}_{\mv{v}_0,\{\mv{w}_i\}}
& ~~~  r_0 \\
\mathtt {Subject \ to} & ~~~ \|\mv{v}_0\|^2+\sum\limits_{i=1}^d\|\mv{w}_i\|^2 \leq \bar{P}.
\end{align*}Note that this problem is the conventional secrecy rate maximization under a MISO broadcast channel (BC) setup and has been solved in \cite{MaKent11}, for which the details are omitted for brevity. We will discuss the main difference of the optimal solution to (P1) from that of (P1-NoET) in \cite{MaKent11} due to the additional harvested energy constraints for ERs in Section \ref{sec:Optimal Solution to Problem (P1)} (see Remark \ref{remark3}).

\subsection{Optimal Solution to (P1)}\label{sec:Optimal Solution to Problem (P1)}


In this subsection, we propose a SDR-based algorithm to solve problem (P1) optimally by reformulating it into two sub-problems. First, similar to \cite{Rui10}, it can be shown that there always exists a SINR constraint $\gamma_e> 0$ at all ERs such that the following problem
\begin{align*}\mathrm{(P1.1)}:\nonumber \\ \mathop{\mathtt{Maximize}}_{\mv{v}_0,\{\mv{w}_i\}}
& ~ \frac{|\mv{v}_0^H\mv{h}|^2}{\sum\limits_{i=1}^d|\mv{w}_i^H\mv{h}|^2+\sigma_0^2} \\
\mathtt {Subject \ to} & ~ \frac{|\mv{v}_0^H\mv{g}_k|^2}{\sum\limits_{i=1}^d|\mv{w}_i^H\mv{g}_k|^2+\sigma_k^2}\leq \gamma_e, ~ 1\leq k \leq K,  \\ & ~ \zeta\left(|\mv{v}_0^H\mv{g}_k|^2+\sum\limits_{i=1}^d|\mv{w}_i^H\mv{g}_k|^2\right)\geq \bar{E}_k, ~ 1\leq k \leq K, \\ & ~
\|\mv{v}_0\|^2+\sum\limits_{i=1}^d\|\mv{w}_i\|^2 \leq \bar{P},
\end{align*}has the same optimal solution to (P1).

%
%
%

Let $g_1(\gamma_e)$ denote the optimal value of problem (P1.1) with a given $\gamma_e>0$. Then, it can be also shown that the optimal value of (P1) is the same as that of the following problem.
\begin{align*}\mathrm{(P1.2)}:~\mathop{\mathtt{Maximize}}_{\gamma_e>0}
& ~~~ \log_2 \left(\frac{1+g_1(\gamma_e)}{1+\gamma_e}\right).
\end{align*}


Let $\gamma_e^\ast$ denote the optimal solution to problem (P1.2). From the above results, with $\gamma_e=\gamma_e^\ast$, it follows that (P1) and (P1.1) have the same optimal solution. Therefore, (P1) can be solved in the following two steps: The optimal $\gamma_e$ for (P1.2) is obtained by one dimension search over $\gamma_e>0$, while given any $\gamma_e$, $g_1(\gamma_e)$ is obtained by solving (P1.1). Hence, in the rest of this subsection, we focus on solving (P1.1).

Note that (P1.1) is still non-convex. Define $\mv{S}=\mv{v}_0\mv{v}_0^H$ and $\mv{Q}=\sum_{i=1}^d\mv{w}_i\mv{w}_i^H$. Then it follows that ${\rm rank}(\mv{S})\leq 1$ and ${\rm rank}(\mv{Q})\leq d$. By ignoring the rank-one constraint on $\mv{S}$, the SDR of problem (P1.1) can be expressed as
\begin{align}\mathrm{(P1.1-SDR)}:\nonumber \\ \mathop{\mathtt{Maximize}}_{\mv{S},\mv{Q}}
& ~ \frac{{\rm Tr}(\mv{H}\mv{S})}{{\rm Tr} (\mv{H}\mv{Q})+\sigma_0^2} \nonumber \\
\mathtt {Subject \ to} & ~ {\rm Tr}(\mv{G}_k\mv{S})\leq \gamma_e({\rm Tr}(\mv{G}_k\mv{Q})+\sigma_k^2), ~ \forall k, \label{1} \\ & ~ \zeta({\rm Tr}(\mv{G}_k\mv{S})+{\rm Tr}(\mv{G}_k\mv{Q})) \geq \bar{E}_k, ~ \forall k, \label{2} \\ & ~
{\rm Tr}(\mv{S})+{\rm Tr}(\mv{Q}) \leq \bar{P}, \label{3} \\ & ~ \mv{S}\succeq \mv{0},  ~ \mv{Q}\succeq \mv{0}, \label{4}
\end{align}where $\mv{H}=\mv{h}\mv{h}^H$ and $\mv{G}_k=\mv{g}_k\mv{g}_k^H$. If the optimal solution to problem (P1.1-SDR), denoted by $\mv{S}^\ast$ and $\mv{Q}^\ast$, satisfies ${\rm rank}(\mv{S}^\ast)= 1$, then the optimal information beam $\mv{v}_0^\ast$ and the optimal $d={\rm rank}(\mv{Q}^\ast)$ number of energy beams $\mv{w}_i^\ast$, $i=1,\cdots,d$, for problem (P1.1) can be obtained from the EVDs of $\mv{S}^\ast$ and $\mv{Q}^\ast$, respectively; otherwise, if ${\rm rank}(\mv{S}^\ast)>1$, then the optimal value of problem (P1.1-SDR) only serves as an upper bound on that of problem (P1.1). In the following, we check whether ${\rm rank}(\mv{S}^\ast)=1$ always holds or not for (P1.1-SDR).

(P1.1-SDR) is non-convex since its objective function is non-concave over $\mv{S}$ and $\mv{Q}$. However, we can apply the Charnes-Cooper transformation \cite{Cooper62,Ma10} to reformulate (P1.1-SDR) as an equivalent convex problem.

\begin{lemma}\label{lemma6}
Problem (P1.1-SDR) is equivalent to the following problem.
\begin{align*}  \mathrm{(P1.1-SDR-Eqv)}: \ \ \ \ \ \ \ \ \ \ \ \ \ \ \ \ \ \ \ \ \ \ \ \ \ \ \ \ \ \ \ \ \ \ \ \ \ \end{align*}\vspace{-28pt}  \begin{align} \mathop{\mathtt{Maximize}}_{\mv{S},\mv{Q},t}
& ~ {\rm Tr}(\mv{H}\mv{S}) \nonumber \\
\mathtt {Subject \ to} & ~ {\rm Tr}(\mv{H}\mv{Q})+t\sigma_0^2=1, \label{eqn:new constraint1}\\ & ~ {\rm Tr}(\mv{G}_k\mv{S})\leq \gamma_e({\rm Tr}(\mv{G}_k\mv{Q})+t\sigma_k^2), ~ \forall k, \label{eqn:new constraint2}\\ & ~ \zeta({\rm Tr}(\mv{G}_k\mv{S})+{\rm Tr}(\mv{G}_k\mv{Q})) \geq t\bar{E}_k, ~ \forall k,\label{eqn:new constraint3}\\ & ~
{\rm Tr}(\mv{S})+{\rm Tr}(\mv{Q}) \leq t\bar{P}, \hspace{-10pt} \label{eqn:new constraint4}\\ & ~ \mv{S}\succeq \mv{0},  ~ \mv{Q}\succeq \mv{0}, ~ t>0. \label{eqn:new constraint5}
\end{align}
\end{lemma}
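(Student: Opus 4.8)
The plan is to establish the equivalence of (P1.1-SDR) and (P1.1-SDR-Eqv) via the Charnes--Cooper transformation, which linearizes a linear-fractional objective by scaling. First I would start from a feasible point $(\mv{S},\mv{Q})$ of (P1.1-SDR) and define the scaling factor $t = 1/({\rm Tr}(\mv{H}\mv{Q})+\sigma_0^2)$, which is strictly positive since $\sigma_0^2 > 0$ and $\mv{Q}\succeq\mv{0}$. Setting $\tilde{\mv{S}} = t\mv{S}$ and $\tilde{\mv{Q}} = t\mv{Q}$, I would verify term by term that $(\tilde{\mv{S}},\tilde{\mv{Q}},t)$ is feasible for (P1.1-SDR-Eqv): constraint~\eqref{eqn:new constraint1} holds by the very choice of $t$; constraints \eqref{eqn:new constraint2}, \eqref{eqn:new constraint3}, \eqref{eqn:new constraint4} follow by multiplying \eqref{1}, \eqref{2}, \eqref{3} through by $t>0$ (the inequality directions are preserved); and \eqref{eqn:new constraint5} is immediate from \eqref{4} and $t>0$. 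Moreover the objective value satisfies ${\rm Tr}(\mv{H}\tilde{\mv{S}}) = t\,{\rm Tr}(\mv{H}\mv{S}) = {\rm Tr}(\mv{H}\mv{S})/({\rm Tr}(\mv{H}\mv{Q})+\sigma_0^2)$, i.e.\ it equals the fractional objective of (P1.1-SDR) at $(\mv{S},\mv{Q})$. Hence every feasible point of (P1.1-SDR) maps to a feasible point of (P1.1-SDR-Eqv) with the same objective value, so the optimal value of (P1.1-SDR-Eqv) is at least that of (P1.1-SDR).

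For the reverse direction I would take a feasible $(\tilde{\mv{S}},\tilde{\mv{Q}},t)$ of (P1.1-SDR-Eqv) and set $\mv{S} = \tilde{\mv{S}}/t$, $\mv{Q} = \tilde{\mv{Q}}/t$, which is well defined because $t>0$. Dividing constraints \eqref{eqn:new constraint2}--\eqref{eqn:new constraint4} by $t$ recovers \eqref{1}--\eqref{3}, and \eqref{4} follows from \eqref{eqn:new constraint5}; constraint \eqref{eqn:new constraint1} rewritten as ${\rm Tr}(\mv{H}\mv{Q})+\sigma_0^2 = 1/t$ shows the fractional objective of (P1.1-SDR) at $(\mv{S},\mv{Q})$ equals ${\rm Tr}(\mv{H}\mv{S})/(1/t) = t\,{\rm Tr}(\mv{H}\mv{S}) = {\rm Tr}(\mv{H}\tilde{\mv{S}})$. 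Thus the optimal values coincide, and the above two maps are inverse to each other (up to the normalization), so an optimal solution of one problem yields an optimal solution of the other. Finally I would note that rank is preserved under positive scaling, so ${\rm rank}(\mv{S}) = {\rm rank}(\tilde{\mv{S}})$; this is the point that matters for the subsequent rank-one analysis, and it is worth stating explicitly in the proof.

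The one place that needs a little care — and the only real obstacle — is ensuring the normalization in \eqref{eqn:new constraint1} can always be met with equality and that $t$ stays strictly positive: this relies precisely on $\sigma_0^2>0$, so that ${\rm Tr}(\mv{H}\mv{Q})+\sigma_0^2$ is bounded away from zero on the feasible set. I would also briefly remark that (P1.1-SDR-Eqv) is now a genuine SDP (linear objective, linear matrix (in)equality constraints), hence convex and efficiently solvable, which is the practical payoff of the lemma and motivates stating it; the convexity claim itself is routine once the reformulation is in place.
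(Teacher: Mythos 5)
Your proposal is correct and follows essentially the same route as the paper: the paper's proof is exactly this two-way Charnes--Cooper map, sending $(\mv{S},\mv{Q})$ to $(\mv{S}/({\rm Tr}(\mv{H}\mv{Q})+\sigma_0^2),\,\mv{Q}/({\rm Tr}(\mv{H}\mv{Q})+\sigma_0^2),\,1/({\rm Tr}(\mv{H}\mv{Q})+\sigma_0^2))$ and back via $(\mv{S}/t,\mv{Q}/t)$, with equal objective values in both directions. Your added remarks on $\sigma_0^2>0$ guaranteeing $t>0$ and on rank preservation under positive scaling are sound and merely make explicit what the paper leaves implicit.
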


\begin{proof}
First, given any feasible solution $(\mv{S},\mv{Q})$ to problem (P1.1-SDR), it can be shown that with the solution $(\mv{S}/({\rm Tr}(\mv{H}\mv{Q})+\sigma_0^2),\mv{Q}/({\rm Tr}(\mv{H}\mv{Q})+\sigma_0^2),1/({\rm Tr}(\mv{H}\mv{Q})+\sigma_0^2))$, (P1.1-SDR-Eqv) achieves the same objective value as that of (P1.1-SDR). Second, given any feasible solution $(\mv{S},\mv{Q},t)$ to (P1.1-SDR-Eqv), it can be similarly shown that with the solution $(\mv{S}/t,\mv{Q}/t)$, (P1.1-SDR) achieves the same objective value as that of (P1.1-SDR-Eqv). Therefore, (P1.1-SDR) and (P1.1-SDR-Eqv) have the same optimal value. Lemma \ref{lemma6} is thus proved.
\end{proof}

According to Lemma \ref{lemma6}, we can obtain the optimal solution to (P1.1-SDR) by solving (P1.1-SDR-Eqv).

\begin{remark}
It is worth noting that in the literature, problem (P1.1-SDR-Eqv) belongs to the so-called ``separable SDP'' \cite{Palomar10} since there are more than one design variables, i.e., $\mv{S}$ and $\mv{Q}$. According to \cite[Theorem 3.2]{Palomar10}, there exists an optimal solution $(\mv{S}^\ast,\mv{Q}^\ast)$ to (P1.1-SDR-Eqv) such that ${\rm rank}^2(\mv{S}^\ast)+{\rm rank}^2(\mv{Q}^\ast) \leq 2K+2$, where $2K+2$ denotes the number of linear constraints given in (\ref{eqn:new constraint1})--(\ref{eqn:new constraint4}). However, this is not sufficient to show ${\rm rank}(\mv{S}^\ast)=1$ because $K$ in our problem can be arbitrarily large. As a result, the well-known result in \cite{Palomar10} for separable SDPs cannot be applied in our case to show the tightness of SDR in (P1.1-SDR-Eqv). It is also worth noting that for the special case of one design variable, the tightness condition of SDR has been widely studied in the literature \cite{Beck06}--\cite{Zhangshuzhong09}, and the best result known so far is for the tightness of SDR with up to four linear constraints \cite{Zhangshuzhong09}. However, since in (P1.1-SDR-Eqv) there are two variables and $2K+2>4$ constraints since $K>1$ in general, the results in \cite{Beck06}--\cite{Zhangshuzhong09} also cannot be applied to our problem. For more information about the tightness condition of SDR, the readers can refer to \cite{Luo10}.
\end{remark}

Since (P1.1-SDR-Eqv) is convex and satisfies the Slater's condition, its duality gap is zero \cite{Boyd04}. Let $\lambda$, $\{\beta_k\}$, $\{\alpha_k\}$, and $\theta$ denote the dual variables of (P1.1-SDR-Eqv) associated with the equality constraint in (\ref{eqn:new constraint1}), the SINR constraints of ERs in (\ref{eqn:new constraint2}), the harvested energy constraints of ERs in (\ref{eqn:new constraint3}), and the sum-power constraint in (\ref{eqn:new constraint4}), respectively. Then the Lagrangian of problem (P1.1-SDR-Eqv) is expressed as
\begin{align}
\begin{small}L_1(\mv{S},\mv{Q},\lambda,\{\beta_k\},\{\alpha_k\},\theta) = {\rm Tr}(\mv{A}_1\mv{S})+{\rm Tr}(\mv{B}_1\mv{Q}) +\xi_1t+\lambda, \end{small}\label{eqn:Lagrangian2}
\end{align}where\begin{align}
& \begin{small} \mv{A}_1=\mv{H}-\sum\limits_{k=1}^K\beta_k \mv{G}_k+\sum\limits_{k=1}^K\alpha_k\zeta\mv{G}_k-\theta\mv{I}, \end{small} \label{eqn:A2} \\
& \begin{small} \mv{B}_1=-\lambda\mv{H}+\sum\limits_{k=1}^K\beta_k\gamma_e\mv{G}_k+\sum\limits_{k=1}^K\alpha_k\zeta\mv{G}_k-\theta \mv{I}, \end{small} \label{eqn:B2} \\
& \begin{small} \xi_1=-\lambda\sigma_0^2+\sum\limits_{k=1}^K\beta_k\gamma_e \sigma_k^2-\sum\limits_{k=1}^K\alpha_k\bar{E}_k+\theta \bar{P}. \end{small} \label{eqn:pi2}
\end{align}

Let $\lambda^\ast$, $\{\beta_k^\ast\geq 0\}$, $\{\alpha_k^\ast \geq 0\}$, and $\theta^\ast\geq 0$ denote the optimal dual solutions to problem (P1.1-SDR-Eqv). Then, we have the following lemma.

\begin{lemma}\label{lemma7}
The optimal dual solution to problem (P1.1-SDR-Eqv) satisfies that $\lambda^\ast>0$ and $\theta^\ast>0$ when $\gamma_e>0$.
\end{lemma}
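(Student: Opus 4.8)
The plan is to prove Lemma~\ref{lemma7} by contradiction, examining the structure of the Lagrangian dual and the KKT conditions of the convex problem (P1.1-SDR-Eqv). Recall from Lemma~\ref{lemma6} that (P1.1-SDR-Eqv) is equivalent to (P1.1-SDR), which is itself a relaxation of (P1.1). By the feasibility assumption on (P1), and since the IR channel $\mv{h}$ and ER channels $\mv{g}_k$'s are linearly independent, the optimal value of (P1.1-SDR-Eqv), namely ${\rm Tr}(\mv{H}\mv{S}^\ast)$, is strictly positive (otherwise the secrecy rate would be zero, which can be improved by steering a small amount of power toward $\mv{h}$ in the null space of all $\mv{g}_k$'s when $M$ is large enough, or more carefully, by invoking that (P1.2)'s optimum is positive). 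Since strong duality holds (Slater's condition is met), the optimal dual value equals this strictly positive primal optimum.

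First I would show $\lambda^\ast > 0$. Suppose instead $\lambda^\ast = 0$ (note $\lambda^\ast$ is the multiplier of an equality constraint, so a priori it is a free sign; but one should check its sign from the structure or argue directly). Looking at the dual objective, the only term in the Lagrangian (\ref{eqn:Lagrangian2}) that carries $\lambda$ multiplying something other than the constant is $\xi_1 t$ through the $-\lambda\sigma_0^2$ piece and $\mv{B}_1$ through $-\lambda\mv{H}$. If $\lambda^\ast = 0$, then $\mv{B}_1 = \sum_k \beta_k^\ast \gamma_e \mv{G}_k + \sum_k \alpha_k^\ast \zeta \mv{G}_k - \theta^\ast \mv{I}$ and the constant term $\lambda$ in the dual objective vanishes. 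The key observation is the complementary-slackness / stationarity requirement that $\mv{A}_1 \preceq \mv{0}$ and $\mv{B}_1 \preceq \mv{0}$ (so the Lagrangian is bounded over $\mv{S}, \mv{Q} \succeq \mv{0}$, forcing ${\rm Tr}(\mv{A}_1 \mv{S}) = {\rm Tr}(\mv{B}_1 \mv{Q}) = 0$ at optimum), together with $\xi_1 = 0$ (so the coefficient of $t$ vanishes, since $t > 0$ is unconstrained above). Then the dual optimal value would be exactly $\lambda^\ast = 0$, contradicting that the primal (hence dual) optimum is strictly positive. Hence $\lambda^\ast > 0$.

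Next, for $\theta^\ast > 0$: suppose $\theta^\ast = 0$. Then $\mv{A}_1 = \mv{H} - \sum_k \beta_k^\ast \mv{G}_k + \sum_k \alpha_k^\ast \zeta \mv{G}_k$ must satisfy $\mv{A}_1 \preceq \mv{0}$, i.e.\ $\mv{H} \preceq \sum_k (\beta_k^\ast - \alpha_k^\ast \zeta) \mv{G}_k$. Here the main obstacle is to derive a contradiction from this matrix inequality using only the linear independence of $\mv{h}, \mv{g}_1, \dots, \mv{g}_K$ — this is exactly the kind of rank/subspace argument the Remark flags as delicate. The idea I would pursue is: since $\mv{H} = \mv{h}\mv{h}^H$ is rank one and the right-hand side is a (signed) combination of the rank-one terms $\mv{g}_k \mv{g}_k^H$, positivity $\mv{H} \preceq \sum_k c_k \mv{G}_k$ with $\mv{H} \succeq \mv{0}$ forces the coefficients $c_k = \beta_k^\ast - \alpha_k^\ast\zeta$ to be such that $\mv{h}$ lies in the range space of the $\mv{g}_k$'s appearing with $c_k > 0$; because $\mv{h}$ is linearly independent from any $M-1$ of the $\mv{g}_k$'s (and from all of them), there must be at least $M$ ER channels with strictly positive coefficient, forcing $\sum_k c_k \mv{G}_k \succ \mv{0}$. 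Then ${\rm Tr}(\mv{B}_1 \mv{Q}^\ast) = 0$ with $\mv{B}_1 = -\lambda^\ast \mv{H} + \sum_k(\beta_k^\ast \gamma_e + \alpha_k^\ast \zeta)\mv{G}_k \preceq \mv{0}$, combined with $\lambda^\ast > 0$, again pushes toward a contradiction (e.g.\ $\mv{B}_1 \preceq \mv{0}$ fails because the $\mv{G}_k$ part is too large, or $\mv{Q}^\ast$ is forced to be $\mv{0}$, which then makes the SINR/energy constraints infeasible given $\gamma_e$ finite and $\bar E_k$ possibly positive). Alternatively — and this is cleaner — I would argue directly at the primal level: if $\theta^\ast = 0$, the sum-power constraint (\ref{eqn:new constraint4}) is inactive, but then one could scale up $(\mv{S}^\ast, \mv{Q}^\ast)$ by a factor slightly larger than one while keeping $t$ fixed, strictly increasing ${\rm Tr}(\mv{H}\mv{S})$, without violating any constraint (the SINR constraints (\ref{eqn:new constraint2}) scale homogeneously in $(\mv{S},\mv{Q})$ so they are preserved, the energy constraints (\ref{eqn:new constraint3}) only relax, and (\ref{eqn:new constraint1}) can be restored by a compensating rescale of $t$). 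This contradicts optimality, so $\theta^\ast > 0$. I expect the cleanest write-up combines the primal scaling argument for $\theta^\ast > 0$ with the dual/boundedness argument for $\lambda^\ast > 0$; the genuinely subtle point, should one go the dual route, is handling the signed combination $\beta_k^\ast - \alpha_k^\ast\zeta$ in $\mv{A}_1$, which is where the linear-independence hypothesis on the channels does the real work.
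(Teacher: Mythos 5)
Your argument for $\lambda^\ast>0$ is essentially the paper's: the dual objective of (P1.1-SDR-Eqv) is exactly $\lambda$, so by strong duality $\lambda^\ast$ equals the (positive) primal optimum. That part is fine (and you are right that the positivity of the primal optimum is the implicit ingredient; the paper also takes it for granted).

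The $\theta^\ast>0$ part has genuine gaps on both of your routes. On the dual route, you work with $\mv{A}_1$, where the coefficient of $\mv{G}_k$ is the \emph{signed} quantity $-\beta_k^\ast+\alpha_k^\ast\zeta$, and you concede you do not know how to close the argument; moreover your intermediate claim (that $\mv{H}\preceq\sum_k c_k\mv{G}_k$ forces at least $M$ strictly positive $c_k$'s and hence $\sum_k c_k\mv{G}_k\succ\mv{0}$) is not justified and fails when $K\geq M$. The paper avoids this entirely by splitting into two cases and, in the nontrivial case, examining $\mv{B}_1^\ast$ in (\ref{eqn:B2}) rather than $\mv{A}_1^\ast$: with $\theta^\ast=0$, $\mv{B}_1^\ast=-\lambda^\ast\mv{H}+\sum_k(\beta_k^\ast\gamma_e+\alpha_k^\ast\zeta)\mv{G}_k$ has a \emph{nonnegative} combination of the $\mv{G}_k$'s, so $\mv{B}_1^\ast\preceq\mv{0}$ forces every $\mv{x}$ orthogonal to $\mv{h}$ to be orthogonal to each $\mv{g}_k$ with a positive multiplier, i.e. $\mv{g}_k\parallel\mv{h}$, contradicting linear independence (and if all multipliers vanish, $\mv{A}_1^\ast=\mv{H}\succeq\mv{0}$ already contradicts $\mv{A}_1^\ast\preceq\mv{0}$). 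Your mention of $\mv{B}_1$ "pushing toward a contradiction" is the right instinct, but it is left as a gesture rather than a proof. On the primal route, the first step is a logical error: $\theta^\ast=0$ does \emph{not} imply the sum-power constraint (\ref{eqn:new constraint4}) is inactive — complementary slackness only gives the converse implication. The scaling step is also broken: the SINR constraints (\ref{eqn:new constraint2}) are not homogeneous in $(\mv{S},\mv{Q})$ with $t$ fixed because of the additive $t\sigma_k^2$ term, and the "compensating rescale of $t$" needed to restore the equality (\ref{eqn:new constraint1}) is a \emph{decrease} of $t$, which tightens (\ref{eqn:new constraint2}) and (\ref{eqn:new constraint4}) rather than leaving them untouched. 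As written, neither route establishes $\theta^\ast>0$; the clean fix is to run the contradiction through $\mv{B}_1^\ast$ as the paper does.
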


\begin{proof}
Please refer to Appendix \ref{appendix6}.
\end{proof}

With $\theta^\ast>0$, it follows that in the optimal solution of problem (P1.1-SDR-Eqv), the sum-power constraint (\ref{eqn:new constraint4}) must be satisfied with equality due to the complementary slackness \cite{Boyd04}. Define \begin{align}\mv{D}_1^\ast=-\lambda^\ast\mv{H}-\sum_{k=1}^K\beta_k^\ast\mv{G}_k+\sum_{k=1}^K\alpha_k^\ast\zeta \mv{G}_k-\theta^\ast\mv{I},\end{align}and $l_1={\rm rank}(\mv{D}_1^\ast)$. Furthermore, let $\mv{\Pi}_1\in \mathbb{C}^{M\times (M-l_1)}$ denote the orthogonal basis of the null space of $\mv{D}_1^\ast$, where $\mv{\Pi}_1=\mv{0}$ if $l_1=M$, and $\mv{\pi}_{1,n}$ denote the $n$th column of $\mv{\Pi}_1$. Then based on Lemma \ref{lemma7}, we have the following proposition.

\begin{proposition}\label{proposition4}
The optimal solution $(\mv{S}^\ast,\mv{Q}^\ast,t^\ast)$ to problem (P1.1-SDR-Eqv) satisfies the following conditions:
\begin{itemize}
\item[1.] ${\rm rank}(\mv{Q}^\ast)\leq \min(K,M)$;
\item[2.] $\mv{S}^\ast$ can be expressed as\begin{align}\label{eqn:feasible rank}\begin{small}\mv{S}^\ast=\sum\limits_{n=1}^{M-l_1}a_n\mv{\pi}_{1,n}\mv{\pi}_{1,n}^H+b\mv{\tau}_1\mv{\tau}_1^H,\end{small}\end{align}where $a_n\geq 0$, $\forall n$, $b>0$, and $\mv{\tau}_1\in \mathbb{C}^{M\times 1}$ has unit-norm and satisfies $\mv{\tau}_1^H\mv{\Pi}_1=\mv{0}$.
\item[3.] If $\mv{S}^\ast$ given in (\ref{eqn:feasible rank}) has the rank larger than one, i.e., there exists at least an $n$ such that $a_n>0$, then the following solution
\begin{align}
& \begin{small}\bar{\mv{S}}^\ast=b\mv{\tau}_1\mv{\tau}_1^H, \end{small} \label{eqn:new S}\\
& \begin{small}\bar{\mv{Q}}^\ast=\mv{Q}^\ast+\sum\limits_{n=1}^{M-l_1}a_n\mv{\pi}_{1,n}\mv{\pi}_{1,n}^H, \end{small}\label{eqn:new Q}\\
& \begin{small}\bar{t}^\ast=t^\ast, \end{small}\label{eqn:new t}
\end{align}with ${\rm rank}(\bar{\mv{S}}^\ast)=1$ is also optimal to problem (P1.1-SDR-Eqv).
\end{itemize}
\end{proposition}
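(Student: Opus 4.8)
The plan is to analyze the Karush--Kuhn--Tucker (KKT) conditions of the convex problem (P1.1-SDR-Eqv), using the strict positivity $\lambda^\ast>0$ and $\theta^\ast>0$ established in Lemma \ref{lemma7}. Since (P1.1-SDR-Eqv) is convex and satisfies Slater's condition, there is zero duality gap, so any optimal primal-dual pair satisfies the stationarity and complementary slackness conditions. The stationarity condition with respect to $\mv{S}$ reads $\mv{A}_1^\ast\preceq\mv{0}$ together with $\mv{A}_1^\ast\mv{S}^\ast=\mv{0}$, where $\mv{A}_1^\ast$ is the matrix in \eqref{eqn:A2} evaluated at the optimal dual point (indeed $\mv{S}^\ast$ must lie in the null space of $-\mv{A}_1^\ast$, which is positive semidefinite at optimality); similarly stationarity in $\mv{Q}$ gives $\mv{B}_1^\ast\preceq\mv{0}$ and $\mv{B}_1^\ast\mv{Q}^\ast=\mv{0}$. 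The key algebraic observation is that $\mv{A}_1^\ast=\mv{D}_1^\ast+\mv{H}+ \text{(a rank-one-related correction)}$; more precisely, comparing \eqref{eqn:A2} with the definition of $\mv{D}_1^\ast$, one has $\mv{A}_1^\ast = \mv{D}_1^\ast + \mv{H} + \sum_k\beta_k^\ast\mv G_k - \text{(something)}$ — I would carefully rewrite $\mv D_1^\ast = \mv A_1^\ast - \mv H$ up to the $\mv G_k$ terms, and likewise relate $\mv B_1^\ast$ to $\mv D_1^\ast$, so that the null-space structure of $\mv D_1^\ast$ controls both $\mv S^\ast$ and $\mv Q^\ast$.

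First I would prove Part 1: from $\mv{B}_1^\ast\mv{Q}^\ast=\mv{0}$ and the form \eqref{eqn:B2}, note $\mv{B}_1^\ast = -\lambda^\ast\mv H - \theta^\ast\mv I + \sum_k(\beta_k^\ast\gamma_e+\alpha_k^\ast\zeta)\mv G_k$. Write $\mv B_1^\ast\preceq\mv 0$; then $\mv{Q}^\ast$ lies in the null space of $-\mv B_1^\ast$. Since $-\lambda^\ast\mv H-\theta^\ast\mv I\prec\mv 0$ (strictly, as $\lambda^\ast,\theta^\ast>0$), the rank of $-\mv B_1^\ast$ is at least $M$ minus the dimension of the span of $\{\mv g_k\}$, hence ${\rm rank}(-\mv B_1^\ast)\ge M-\min(K,M)$, which forces ${\rm rank}(\mv Q^\ast)\le\min(K,M)$. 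Next, Part 2: from $\mv{A}_1^\ast\mv{S}^\ast=\mv 0$ I would argue that $\mv{S}^\ast$ lives in the null space of $-\mv A_1^\ast$, and by relating $-\mv A_1^\ast$ to $-\mv D_1^\ast$ (they differ by $-\mv H$ plus a PSD combination of the $\mv G_k$), the null space of $-\mv A_1^\ast$ is contained in ${\rm null}(\mv D_1^\ast)\oplus{\rm span}\{\mv\tau_1\}$ for an appropriate unit vector $\mv\tau_1$ orthogonal to $\mv\Pi_1$; combined with $\mv S^\ast\succeq\mv 0$ this yields the decomposition \eqref{eqn:feasible rank}, where the coefficient $b$ must be strictly positive because $\mv S^\ast\ne\mv 0$ (otherwise ${\rm Tr}(\mv H\mv S^\ast)=0$ and the objective is zero, contradicting optimality together with feasibility of (P1) and $\gamma_e>0$).

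Finally, for Part 3 I would verify directly that the re-packaged solution $(\bar{\mv S}^\ast,\bar{\mv Q}^\ast,\bar t^\ast)$ in \eqref{eqn:new S}--\eqref{eqn:new t} is feasible and attains the same objective. The objective is unchanged since ${\rm Tr}(\mv H\bar{\mv S}^\ast)={\rm Tr}(\mv H\mv S^\ast)-\sum_n a_n{\rm Tr}(\mv H\mv\pi_{1,n}\mv\pi_{1,n}^H)$, and each $\mv\pi_{1,n}$ satisfies $\mv H\mv\pi_{1,n}=\mv 0$ because $\mv\pi_{1,n}\in{\rm null}(\mv D_1^\ast)$ and, reading off the definition of $\mv D_1^\ast$, any vector in its null space is annihilated by $\mv H$ (this uses $\lambda^\ast>0$ and will need the linear-independence assumption on the channels to separate the $\mv H$ and $\mv G_k$ contributions) — so ${\rm Tr}(\mv H\bar{\mv S}^\ast)={\rm Tr}(\mv H\mv S^\ast)$. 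Moreover ${\rm Tr}(\mv G_k\bar{\mv S}^\ast)+{\rm Tr}(\mv G_k\bar{\mv Q}^\ast)={\rm Tr}(\mv G_k\mv S^\ast)+{\rm Tr}(\mv G_k\mv Q^\ast)$ and ${\rm Tr}(\bar{\mv S}^\ast)+{\rm Tr}(\bar{\mv Q}^\ast)={\rm Tr}(\mv S^\ast)+{\rm Tr}(\mv Q^\ast)$, so constraints \eqref{eqn:new constraint3}, \eqref{eqn:new constraint4} and the equality \eqref{eqn:new constraint1} are preserved; for \eqref{eqn:new constraint2} I would check that moving the $\sum_n a_n\mv\pi_{1,n}\mv\pi_{1,n}^H$ term from $\mv S^\ast$ into $\mv Q^\ast$ only \emph{decreases} the left-hand side ${\rm Tr}(\mv G_k\mv S)$ and \emph{increases} ${\rm Tr}(\mv G_k\mv Q)$, hence the SINR constraint can only become slacker. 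Since $\bar{\mv S}^\ast=b\mv\tau_1\mv\tau_1^H$ with $b>0$ has rank one, this completes Part 3.

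The main obstacle I anticipate is Part 2: extracting the precise statement that ${\rm null}(-\mv A_1^\ast)\subseteq{\rm null}(\mv D_1^\ast)\oplus{\rm span}\{\mv\tau_1\}$ with a \emph{single} extra direction $\mv\tau_1$, rather than several. This requires a careful rank/inertia argument — showing that $-\mv A_1^\ast$ and $-\mv D_1^\ast$ differ by exactly the negative-semidefinite-plus-rank-one perturbation $-\mv H$ (after the $\mv G_k$ terms are absorbed with the right signs), so their null-space dimensions differ by at most one — and it is also the step where the linear-independence hypothesis on $\mv h,\mv g_1,\dots,\mv g_K$ is essential to guarantee $\mv H\mv\pi_{1,n}=\mv 0$ and to keep the bookkeeping of ranks tight.
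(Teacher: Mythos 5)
Your proposal follows essentially the same route as the paper's proof: KKT complementary slackness with $\lambda^\ast,\theta^\ast>0$ from Lemma \ref{lemma7}, rank subadditivity to bound ${\rm rank}(\mv{Q}^\ast)$ via the negative-definite part of $\mv{B}_1^\ast$, the identity $\mv{A}_1^\ast=\mv{D}_1^\ast+(1+\lambda^\ast)\mv{H}$ viewed as a rank-one perturbation to show ${\rm null}(\mv{A}_1^\ast)$ exceeds ${\rm null}(\mv{D}_1^\ast)$ by exactly one direction $\mv{\tau}_1$, and direct feasibility checking of the repackaged solution. The only minor inaccuracy is that $\mv{H}\mv{\Pi}_1=\mv{0}$ needs no linear-independence assumption — it follows purely from $\mv{A}_1^\ast\preceq\mv{0}$ and $\mv{\pi}_{1,n}^H\mv{A}_1^\ast\mv{\pi}_{1,n}=(1+\lambda^\ast)|\mv{h}^H\mv{\pi}_{1,n}|^2$ — which is exactly how the paper closes the step you flagged as the main obstacle.
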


\begin{proof}
Please refer to Appendix \ref{appendix7}.
\end{proof}

With Proposition \ref{proposition4}, we are ready to find the optimal solution to problem (P1.1-SDR) with a rank-one covariance matrix for $\mv{S}$ as follows. First, we solve (P1.1-SDR-Eqv) via CVX. If the obtained solution $(\mv{S}^\ast,\mv{Q}^\ast,t^\ast)$ satisfies that ${\rm rank}(\mv{S}^\ast)=1$, then $(\mv{S}^\ast/t^\ast,\mv{Q}^\ast/t^\ast)$ will be the optimal solution to (P1.1-SDR) according to Lemma \ref{lemma6}. Otherwise, if ${\rm rank}(\mv{S}^\ast)>1$, we can construct a new solution $(\bar{\mv{S}}^\ast,\bar{\mv{Q}}^\ast,\bar{t}^\ast)$ with ${\rm rank}(\bar{\mv{S}}^\ast)=1$ according to (\ref{eqn:feasible rank})--(\ref{eqn:new t}). Then, $(\bar{\mv{S}}^\ast/\bar{t}^\ast,\bar{\mv{Q}}^\ast/\bar{t}^\ast)$ will be the optimal solution to (P1.1-SDR). Therefore, the rank-one relaxation on $\mv{S}$ in (P1.1-SDR) results in no loss of optimality to (P1.1), and given any $\gamma_e>0$, the value of $g_1(\gamma_e)$ can be obtained by solving (P1.1-SDR-Eqv). Furthermore, since ${\rm rank}(\mv{Q}^\ast)\leq \min(K,M)$ in Proposition \ref{proposition4}, it implies that in the case of $K<M$, at most $K$ energy beams are needed in the optimal solution of (P1), i.e., $d\leq K$.

It is worth noting that in general, Proposition \ref{proposition4} only guarantees the existence of a rank-one optimal covariance solution $\mv{S}^\ast$ to (P1.1-SDR-Eqv) and thus (P1.1). One interesting question is thus under what conditions the rank-one solution $\mv{S}^\ast$ to (P1.1-SDR-Eqv) is unique. To answer this question, we define the following two sets as
\begin{align}
\begin{small}\Psi=\{k|\beta_k^\ast=0, ~ k=1,\cdots,K\}, \end{small}\label{psi1} \\
\begin{small}\bar{\Psi}=\{k|\beta_k^\ast>0, ~ k=1,\cdots,K\}. \end{small}\label{psi2}
\end{align}Moreover, define $\bar{\mv{G}}_1=[\mv{h},\{\mv{g}_k\}]^H$, $\forall k\in \bar{\Psi}$, in which the row vectors consist of the channels of IR, i.e., $\mv{h}^H$, and a subset of ERs, i.e., $\mv{g}_k^H$'s, whose SINR constraints are tight in the optimal solution to (P1.1-SDR-Evq) with $\beta_k^\ast>0$ due to the complimentary slackness. Then we have the following proposition.

\begin{proposition}\label{proposition3}
The optimal solution $(\mv{S}^\ast,\mv{Q}^\ast,t^\ast)$ to (P1.1-SDR-Eqv) always satisfies that ${\rm rank}(\mv{S}^\ast)=1$ if there is no non-trivial (non-zero) solution $\mv{x}\in \mathbb{C}^{M\times 1}$ to the following equations:
\begin{align}\label{eqn:sufficient condition}
\begin{small}\left\{\begin{array}{l}\mv{x}^H\left(\sum\limits_{k\in \Psi}\alpha_k^\ast\zeta\mv{G}_k-\theta^\ast\mv{I}\right)\mv{x}=0, \\ \bar{\mv{G}}_1\mv{x}=\mv{0}. \end{array} \right.\end{small}
\end{align}
\end{proposition}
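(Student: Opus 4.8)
The plan is to argue by contradiction, exhibiting a non-trivial solution of (\ref{eqn:sufficient condition}) whenever ${\rm rank}(\mv{S}^\ast)>1$. Suppose ${\rm rank}(\mv{S}^\ast)>1$ for some optimal $(\mv{S}^\ast,\mv{Q}^\ast,t^\ast)$ of (P1.1-SDR-Eqv). By part~2 of Proposition \ref{proposition4}, $\mv{S}^\ast$ has the form (\ref{eqn:feasible rank}); since the columns of $\mv{\Pi}_1$ are orthonormal and $\mv{\tau}_1$ is orthogonal to them with $b>0$, the two terms in (\ref{eqn:feasible rank}) live on mutually orthogonal subspaces, so ${\rm rank}(\mv{S}^\ast)$ equals one plus the number of indices $n$ with $a_n>0$; hence there is at least one such $n$. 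Fix it and set $\mv{x}:=\mv{\pi}_{1,n}\neq\mv{0}$, a unit-norm vector. By construction $\mv{D}_1^\ast\mv{x}=\mv{0}$, and $\mv{x}$ lies in the range space of $\mv{S}^\ast$. I would then show that this $\mv{x}$ satisfies both equations of (\ref{eqn:sufficient condition}), contradicting the hypothesis of the proposition and thereby forcing ${\rm rank}(\mv{S}^\ast)=1$. Since the argument uses only KKT conditions valid for any optimal primal--dual pair, it actually shows that every optimal $\mv{S}^\ast$ is rank-one under the stated condition.

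First I would prove $\bar{\mv{G}}_1\mv{x}=\mv{0}$, i.e.\ $\mv{h}^H\mv{x}=0$ and $\mv{g}_k^H\mv{x}=0$ for all $k\in\bar{\Psi}$. Complementary slackness for the constraint $\mv{S}\succeq\mv{0}$ gives $\mv{A}_1^\ast\mv{S}^\ast=\mv{0}$ with $\mv{A}_1^\ast\preceq\mv{0}$, so ${\rm range}(\mv{S}^\ast)\subseteq{\rm null}(\mv{A}_1^\ast)$ and hence $\mv{A}_1^\ast\mv{x}=\mv{0}$. Comparing (\ref{eqn:A2}) with the definition of $\mv{D}_1^\ast$ gives $\mv{A}_1^\ast=\mv{D}_1^\ast+(1+\lambda^\ast)\mv{H}$; using $\mv{D}_1^\ast\mv{x}=\mv{0}$ and $\lambda^\ast>0$ (Lemma \ref{lemma7}), this yields $(\mv{h}^H\mv{x})\mv{h}=\mv{0}$, i.e.\ $\mv{h}^H\mv{x}=0$. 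For the ER channels in $\bar{\Psi}$, I would invoke part~3 of Proposition \ref{proposition4}: the point $(\bar{\mv{S}}^\ast,\bar{\mv{Q}}^\ast,\bar{t}^\ast)$ in (\ref{eqn:new S})--(\ref{eqn:new t}) is also optimal, so complementary slackness against the \emph{same} optimal dual variables gives $\mv{B}_1^\ast\bar{\mv{Q}}^\ast=\mv{0}$, hence ${\rm range}(\bar{\mv{Q}}^\ast)\subseteq{\rm null}(\mv{B}_1^\ast)$. Since $\bar{\mv{Q}}^\ast=\mv{Q}^\ast+\sum_n a_n\mv{\pi}_{1,n}\mv{\pi}_{1,n}^H$ and ranges of positive semidefinite matrices add, $\mv{x}\in{\rm range}(\bar{\mv{Q}}^\ast)$, so $\mv{B}_1^\ast\mv{x}=\mv{0}$. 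Comparing (\ref{eqn:B2}) with the definition of $\mv{D}_1^\ast$ gives $\mv{B}_1^\ast=\mv{D}_1^\ast+(1+\gamma_e)\sum_{k\in\bar{\Psi}}\beta_k^\ast\mv{G}_k$; using $\mv{D}_1^\ast\mv{x}=\mv{0}$, $\gamma_e>0$, and $\beta_k^\ast>0$ for $k\in\bar{\Psi}$ (by (\ref{psi2})), this forces $|\mv{g}_k^H\mv{x}|^2=0$ for every $k\in\bar{\Psi}$. Hence $\bar{\mv{G}}_1\mv{x}=\mv{0}$.

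Next I would verify the quadratic equation. Since $\mv{D}_1^\ast\mv{x}=\mv{0}$, we have $\mv{x}^H\mv{D}_1^\ast\mv{x}=0$. Expanding the definition of $\mv{D}_1^\ast$ and substituting $\mv{h}^H\mv{x}=0$, $\mv{g}_k^H\mv{x}=0$ for $k\in\bar{\Psi}$, and $\beta_k^\ast=0$ for $k\in\Psi$ (by (\ref{psi1})), every term except those summed over $k\in\Psi$ and the $-\theta^\ast\mv{I}$ term vanishes, leaving $\mv{x}^H(\sum_{k\in\Psi}\alpha_k^\ast\zeta\mv{G}_k-\theta^\ast\mv{I})\mv{x}=0$. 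Thus $\mv{x}$ is a non-trivial solution of (\ref{eqn:sufficient condition}), which is the desired contradiction; therefore ${\rm rank}(\mv{S}^\ast)=1$.

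The step I expect to be the main obstacle is obtaining $\mv{g}_k^H\mv{x}=0$ on $\bar{\Psi}$: it requires producing a \emph{second} optimal primal point (via Proposition \ref{proposition4}.3) whose range contains $\mv{x}$, and then reading complementary slackness off the \emph{same} optimal dual variables. This is legitimate because (P1.1-SDR-Eqv) is convex with zero duality gap, so complementary slackness holds between any primal optimum and any dual optimum; nonetheless it is the delicate point of the argument. The remainder is bookkeeping: matching the KKT matrices $\mv{A}_1^\ast,\mv{B}_1^\ast,\mv{D}_1^\ast$ through (\ref{eqn:A2})--(\ref{eqn:B2}) and keeping careful track of the partition $\Psi\cup\bar{\Psi}$.
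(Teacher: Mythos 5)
Your proof is correct, and its skeleton is the same as the paper's: both arguments produce a nonzero vector $\mv{x}$ in the null space of $\mv{D}_1^\ast$ and show it solves (\ref{eqn:sufficient condition}), using the identities $\mv{A}_1^\ast=\mv{D}_1^\ast+(1+\lambda^\ast)\mv{H}$ and $\mv{B}_1^\ast=\mv{D}_1^\ast+(1+\gamma_e)\sum_k\beta_k^\ast\mv{G}_k$ to extract $\mv{h}^H\mv{x}=0$, $\mv{g}_k^H\mv{x}=0$ on $\bar{\Psi}$, and finally the quadratic identity. The differences are in packaging and in one intermediate step. The paper proves the contrapositive entirely at the level of $\mv{D}_1^\ast$: assuming its maximum eigenvalue is zero, it takes $\mv{x}$ with $\mv{x}^H\mv{D}_1^\ast\mv{x}=0$, splits $\mv{D}_1^\ast$ into the two negative semidefinite pieces $\mv{B}_1^\ast$ and $-\sum_k(1+\gamma_e)\beta_k^\ast\mv{G}_k$, and concludes both quadratic forms vanish --- a one-line route to $\mv{g}_k^H\mv{x}=0$ on $\bar{\Psi}$ --- whence $\mv{D}_1^\ast\prec\mv{0}$, $l_1=M$, and $\mv{S}^\ast=b\mv{\tau}_1\mv{\tau}_1^H$ by Proposition \ref{proposition4}. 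You instead start from a rank-deficient-in-the-wrong-direction $\mv{S}^\ast$, pick $\mv{x}=\mv{\pi}_{1,n}$ with $a_n>0$, and reach $\mv{g}_k^H\mv{x}=0$ by constructing the second optimum $(\bar{\mv{S}}^\ast,\bar{\mv{Q}}^\ast,\bar{t}^\ast)$ of Proposition \ref{proposition4}.3 and reading $\mv{B}_1^\ast\bar{\mv{Q}}^\ast=\mv{0}$ against the same dual optimum; this is legitimate (complementary slackness pairs any primal optimum with any dual optimum under zero duality gap, and ranges of PSD matrices add), but it is a detour --- the PSD-splitting argument gives the same conclusion directly from $\mv{x}^H\mv{D}_1^\ast\mv{x}=0$ without invoking a second primal point. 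One stylistic gain of your version is that it makes explicit which eigenvector of a hypothetical higher-rank $\mv{S}^\ast$ witnesses the violation of (\ref{eqn:sufficient condition}); one gain of the paper's version is that it establishes the stronger intermediate fact $\mv{D}_1^\ast\prec\mv{0}$, which immediately yields rank-one-ness of \emph{every} optimal $\mv{S}^\ast$ via the already-proved structure result rather than re-deriving it.
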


\begin{proof}
Please refer to Appendix \ref{appendix5}.
\end{proof}

Let $|\bar{\Psi}|$ denote the cardinality of the set $\bar{\Psi}$. Based on Proposition \ref{proposition3}, we then have the following two corollaries.

\begin{corollary}\label{corollary1}
In the case of $K<M-1$, if $|\bar{\Psi}|=K$, i.e., the SINR constraint (\ref{eqn:new constraint2}) is tight for all ERs, then the optimal solution $(\mv{S}^\ast,\mv{Q}^\ast,t^\ast)$ to (P1.1-SDR-Eqv) always satisfies that ${\rm rank}(\mv{S}^\ast)=1$.
\end{corollary}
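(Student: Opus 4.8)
The plan is to derive Corollary \ref{corollary1} as a direct consequence of Proposition \ref{proposition3} by showing that, under the stated hypotheses, the system \eqref{eqn:sufficient condition} admits only the trivial solution $\mv{x}=\mv{0}$. The key observation is that when $|\bar{\Psi}|=K$, \emph{all} ERs have tight SINR constraints, so $\Psi=\emptyset$ and $\bar{\mv{G}}_1=[\mv{h},\mv{g}_1,\ldots,\mv{g}_K]^H$, an $(K+1)\times M$ matrix.

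First I would examine the first equation in \eqref{eqn:sufficient condition}. Since $\Psi=\emptyset$, the sum $\sum_{k\in\Psi}\alpha_k^\ast\zeta\mv{G}_k$ vanishes, and the equation collapses to $\mv{x}^H(-\theta^\ast\mv{I})\mv{x}=-\theta^\ast\|\mv{x}\|^2=0$. By Lemma \ref{lemma7}, $\theta^\ast>0$ for any $\gamma_e>0$, so this forces $\mv{x}=\mv{0}$ immediately. Thus the hypothesis $K<M-1$ is not even needed for this particular argument; the tightness of all $K$ SINR constraints already kills the first equation. I would state this cleanly, noting that the second equation $\bar{\mv{G}}_1\mv{x}=\mv{0}$ is then automatically satisfied (or, if one prefers, that the linear-independence assumption on the channels combined with $K+1\leq M-1<M$ — using $K<M-1$ — would independently force $\mv{x}=\mv{0}$ via $\bar{\mv{G}}_1\mv{x}=\mv{0}$ having full column rank after appropriate reasoning, but the $\theta^\ast$ argument is the cleanest route). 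Either way, there is no non-trivial solution to \eqref{eqn:sufficient condition}, so Proposition \ref{proposition3} applies and ${\rm rank}(\mv{S}^\ast)=1$.

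The main (and essentially only) subtlety is to make sure the invocation of Lemma \ref{lemma7} is legitimate here, i.e., that $\theta^\ast>0$ indeed holds under the relevant regime of $\gamma_e$; but since (P1.1-SDR-Eqv) is being solved for a fixed $\gamma_e>0$ in the two-stage procedure, and Lemma \ref{lemma7} guarantees $\theta^\ast>0$ precisely in that range, this is immediate. The role of the condition $K<M-1$ is to guarantee feasibility and the structural setup (so that $\bar{\mv{G}}_1$ with $K+1\leq M-1$ rows is consistent with the linear-independence assumption and Proposition \ref{proposition4}'s expression \eqref{eqn:feasible rank} is non-degenerate), rather than to drive the rank reduction itself. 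I would therefore keep the proof to two or three sentences: invoke $\Psi=\emptyset$, reduce \eqref{eqn:sufficient condition} to $\theta^\ast\|\mv{x}\|^2=0$, invoke $\theta^\ast>0$ from Lemma \ref{lemma7}, conclude $\mv{x}=\mv{0}$, and cite Proposition \ref{proposition3} for ${\rm rank}(\mv{S}^\ast)=1$. The hard part, if any, is simply spelling out why $|\bar{\Psi}|=K$ implies $\Psi=\emptyset$ — which is immediate from the definitions \eqref{psi1}--\eqref{psi2} since $\Psi$ and $\bar{\Psi}$ partition $\{1,\ldots,K\}$.
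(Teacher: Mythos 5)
Your proposal is correct and follows the same route as the paper's own proof: with $\Psi=\emptyset$ the first equation in (\ref{eqn:sufficient condition}) reduces to $-\theta^\ast\|\mv{x}\|^2=0$, Lemma \ref{lemma7} gives $\theta^\ast>0$, so only $\mv{x}=\mv{0}$ solves it and Proposition \ref{proposition3} applies. Your side observation that the hypothesis $K<M-1$ is not actually used in this argument is accurate and matches the paper's proof, which likewise never invokes it.
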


\begin{proof}
If $|\bar{\Psi}|=K$, i.e., $\Psi=\emptyset$, then we have $\sum_{k\in \Psi}\alpha_k^\ast \zeta\mv{G}_k-\theta^\ast\mv{I}=-\theta^\ast\mv{I}$. Since $\theta^\ast>0$ according to Lemma \ref{lemma7}, there is no non-zero solution to the equation $-\theta^\ast\mv{x}^H\mv{x}=0$. According to Proposition \ref{proposition3}, Corollary \ref{corollary1} is thus proved.
\end{proof}

\begin{corollary}\label{corollary2}
In the case of $K\geq M-1$, if $|\bar{\Psi}|\geq M-1$, then the optimal solution $(\mv{S}^\ast,\mv{Q}^\ast,t^\ast)$ to (P1.1-SDR-Eqv) always satisfies that ${\rm rank}(\mv{S}^\ast)=1$.
\end{corollary}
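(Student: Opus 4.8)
The plan is to reduce this case to Proposition \ref{proposition3} exactly as was done for Corollary \ref{corollary1}, but now exploiting the second equation $\bar{\mv{G}}_1\mv{x}=\mv{0}$ to kill any nonzero candidate $\mv{x}$ directly, rather than the first. First I would observe that the matrix $\bar{\mv{G}}_1=[\mv{h},\{\mv{g}_k\}]^H$ for $k\in\bar\Psi$ has exactly $|\bar\Psi|+1$ rows: one from $\mv{h}^H$ and $|\bar\Psi|$ from the $\mv{g}_k^H$'s with tight SINR constraints. Under the hypothesis $|\bar\Psi|\geq M-1$, this gives at least $M$ rows. By the linear-independence assumption on $\{\mv{h},\mv{g}_1,\dots,\mv{g}_K\}$ stated in Section \ref{sec:system model} (for $K>M$, any $M$ of these channel vectors span $\mathbb{C}^M$), any $M$ of these rows already form a full-rank $M\times M$ matrix, hence $\mathrm{rank}(\bar{\mv{G}}_1)=M$. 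Therefore the only solution to $\bar{\mv{G}}_1\mv{x}=\mv{0}$ is $\mv{x}=\mv{0}$, so the system (\ref{eqn:sufficient condition}) admits no nontrivial solution and Proposition \ref{proposition3} immediately yields $\mathrm{rank}(\mv{S}^\ast)=1$.

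The main thing to be careful about — and what I expect to be the only real obstacle — is the bookkeeping on the number of rows versus the dimension $M$, together with correctly invoking the linear-independence assumption in the regime $K\geq M-1$, which may include $K\geq M$. When $K\geq M$, the assumption is the stronger statement that every size-$M$ subcollection of $\{\mv{h}^H,\mv{g}_1^H,\dots,\mv{g}_K^H\}$ has rank $M$; I would state explicitly that the $|\bar\Psi|+1\geq M$ rows of $\bar{\mv{G}}_1$ contain such a subcollection (the $\mv{g}_k^H$ with $k\in\bar\Psi$ are distinct ERs, and $\mv{h}^H$ is a distinct vector), so the rank is $M$. One minor edge point: if $|\bar\Psi|+1>M$ the matrix $\bar{\mv{G}}_1$ is ``tall'' but still has rank $M$, which is all that is needed for $\bar{\mv{G}}_1\mv{x}=\mv{0}\Rightarrow\mv{x}=\mv{0}$; the first equation in (\ref{eqn:sufficient condition}) plays no role here and can be discarded.

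I would keep the write-up to two or three sentences: state that $\bar{\mv{G}}_1$ has at least $M$ rows when $|\bar\Psi|\geq M-1$, invoke linear independence to conclude $\mathrm{rank}(\bar{\mv{G}}_1)=M$ and hence $\mv{x}=\mv{0}$ is forced, and then apply Proposition \ref{proposition3}. No further computation is needed, so this should be a short corollary proof paralleling that of Corollary \ref{corollary1}.
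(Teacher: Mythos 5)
Your proposal is correct and follows exactly the paper's own argument: since $|\bar{\Psi}|\geq M-1$, the matrix $\bar{\mv{G}}_1$ has at least $M$ rows drawn from the linearly independent channel set, so ${\rm rank}(\bar{\mv{G}}_1)=M$, forcing $\mv{x}=\mv{0}$ in (\ref{eqn:sufficient condition}) and letting Proposition \ref{proposition3} conclude. Your extra bookkeeping on the row count and the tall-matrix edge case is just a more explicit rendering of the same one-line proof in the paper.
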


\begin{proof}
Since all the channels are assumed to be linearly independent, if $|\bar{\Psi}|\geq M-1$, then we have ${\rm rank}(\bar{\mv{G}}_1)=M$, and thus there is no non-zero solution to the equation $\bar{\mv{G}}_1\mv{x}=\mv{0}$. According to Proposition \ref{proposition3}, Corollary \ref{corollary2} is thus proved.
\end{proof}


According to the complementary slackness, if $\beta_k^\ast>0$, then the SINR constraint for ${{\rm ER}}_k$ must be tight in problem (P1.1-SDR-Eqv). In other words, $|\bar{\Psi}|$ denotes the number of ERs whose SINR constraints are active in the optimal solution to (P1.1-SDR-Eqv). Therefore, Corollaries \ref{corollary1} and \ref{corollary2} imply that if $|\bar{\Psi}|\geq \min(M-1,K)$, i.e., the SINR constraint in (\ref{eqn:new constraint2}) is tight for at least $\min(M-1,K)$ ERs, ${\rm rank}(\mv{S}^\ast)=1$ must hold for (P1.1-SDR-Eqv). Note that in our assumed system setup, all ERs are closer to the Tx than the IR and as a result, they all have better channels for eavesdropping the IR's message. It is thus expected that the SINR constraint in (\ref{eqn:new constraint2}) should be active for all ERs with a very high probability. Therefore, the condition given in Corollaries \ref{corollary1} and \ref{corollary2}, i.e., $|\bar{\Psi}|\geq \min(M-1,K)$, can be considered to be practically satisfied, under which the uniqueness of the rank-one optimal covariance solution $\mv{S}^\ast$ to (P1.1-SDR-Eqv) also holds.

\begin{remark}\label{remark3}
It is worth pointing out two main differences in the optimal beamforming solution to (P1) with versus without the energy harvesting (EH) constraints. First, consider the case of one single ER, i.e., $K=1$, in (P1). In this case, without the EH constraint, it has been shown in \cite{Liu10}, \cite{Oggier11} that the secrecy capacity for the IR is given by $C_{\rm s}=\max\limits_{\|\mv{v}_0\|^2\leq \bar{P}} ~~ \log_2\left(1+\frac{|\mv{v}_0^H\mv{h}|^2}{\sigma_0^2}\right)-\log_2\left(1+\frac{|\mv{v}_0^H\mv{g}_1|^2}{\sigma_1^2}\right)$. Notice that AN is not needed in achieving $C_{\rm s}$, which is also the optimal value of (P1-NoET) in the case of $K=1$, and the optimal beamforming solution for the IR to achieve $C_{\rm s}$ has been obtained in \cite{Rui10}, \cite{Ulukus07}. In contrast, with the EH constraint added to (P1), in order to deliver the required wireless energy to the ER and at the same time achieve the maximum secrecy rate for the IR, AN is in general needed according to Proposition \ref{proposition4}, since it follows that ${\rm rank}(\mv{Q}^\ast)\leq 1$ in the case of $K=1$, i.e., one energy beam is in general needed to power the ER and in the meanwhile carry the AN to interfere with it from eavesdropping the IR's message. Second, consider the more general case with multiple ERs, i.e., $K>1$. In this case, without considering EH constraints at ERs, (P1) reduces to (P1-NoET), which has been solved in \cite{MaKent11}. It was shown in \cite{MaKent11}, \cite{Ma11} that if all the channels are linearly independent, SDR can always obtain the unique optimal rank-one covariance (beamforming) solution for the IR. However, with the additional EH constraints added for ERs, it is in general not always true that SDR yields a rank-one covariance solution for IR, as shown in Proposition \ref{proposition4}. However, we are able to show in Proposition \ref{proposition4} that the optimal rank-one covariance solution for IR always exists and can be obtained by a simple reconstruction of the optimal solution.

\end{remark}

\begin{remark}\label{remark5}
It is also worth noting that the optimal solution obtained for (P1) is applicable for the special case of maximizing the IR's rate but without considering the secret transmission as given by the following problem.
\begin{align*}\mathrm{({\rm P1-NoSC})}:\nonumber \\ \mathop{\mathtt{Maximize}}_{\mv{v}_0,\{\mv{w}_i\}}
& ~  \begin{small}\log_2\left(1+\frac{|\mv{v}_0^H\mv{h}|^2}{\sum\limits_{i=1}^d|\mv{w}_i^H\mv{h}|^2+\sigma_0^2}\right) \end{small} \\
\mathtt {Subject \ to} & ~ \zeta \left(|\mv{v}_0^H\mv{g}_k|^2+\sum\limits_{i=1}^d|\mv{w}_i^H\mv{g}_k|^2\right) \geq \bar{E}_k, ~ \forall k, \\ & ~
\|\mv{v}_0\|^2+\sum\limits_{i=1}^d\|\mv{w}_i\|^2 \leq \bar{P}.
\end{align*}It is observed that (P1-NoSC) is equivalent to (P1.1) by setting $\gamma_e=\infty$ in the first constraint. Therefore, similar to problem (P1.1), SDR can be applied to obtain the optimal beamforming solution to (P1-NoSC).
\end{remark}

\subsection{Suboptimal Solutions to (P1)}\label{sec:Suboptimal Solutions to Problem (P1)}

The optimal solution to (P1) proposed in Section \ref{sec:Optimal Solution to Problem (P1)} requires a joint optimization of the information/energy beamforming vectors and their power allocation. In this subsection, we propose two suboptimal solutions for (P1) which can be designed with lower complexity than the optimal solution. Similar to \cite{Goel06}, in our proposed suboptimal solutions, the energy beams $\mv{w}_i$, $i=1,\cdots,d$, are all restricted to lie in the null space of the IR's channel $\mv{h}$ such that they cause no interference to IR. However, the information beam $\mv{v}_0$ is aligned to the null space of the ERs' channels $\mv{G}=[\mv{g}_1,\cdots,\mv{g}_K]^H$ in the first suboptimal solution in order to eliminate the information leaked to all ERs, but to the same direction as $\mv{h}$ in the second suboptimal solution to maximize the IR's SINR. Note that the first suboptimal solution is only applicable when $K<M$ since otherwise the null space of $\mv{G}$ is empty. In the following, we present the two proposed suboptimal solutions in more details.

\subsubsection{Suboptimal Solution \uppercase\expandafter{\romannumeral1}}

\ \ \ \ \

Supposing that $K<M$, then the first suboptimal solution aims to solve problem (P1) with the additional constraints: $\mv{v}_0^H\mv{g}_k=0$, $\forall k$, and $\mv{w}_i^H\mv{h}=0$, $\forall i$. Consider first the information beam $\mv{v}_0$. Let the singular value decomposition (SVD) of $\mv{G}$ be denoted as \vspace{-3pt}
\begin{align}
\begin{small}\mv{G}=\mv{U}\mv{\Lambda}\mv{V}^H=\mv{U}\mv{\Lambda}[\bar{\mv{V}} \ \tilde{\mv{V}}]^H, \end{small}
\end{align}where $\mv{U}\in \mathbb{C}^{K\times K}$ and $\mv{V}\in \mathbb{C}^{M\times M}$ are unitary matrices, i.e., $\mv{U}\mv{U}^H=\mv{U}^H\mv{U}=\mv{I}$, $\mv{V}\mv{V}^H=\mv{V}^H\mv{V}=\mv{I}$, and $\mv{\Lambda}$ is a $K\times M$ rectangular diagonal matrix. Furthermore, $\bar{\mv{V}}\in \mathbb{C}^{M\times K}$ and $\tilde{\mv{V}}\in \mathbb{C}^{M\times (M-K)}$ consist of the first $K$ and the last $M-K$ right singular vectors of $\mv{G}$, respectively. It can be shown that $\tilde{\mv{V}}$ with $\tilde{\mv{V}}^H\tilde{\mv{V}}=\mv{I}$ forms an orthogonal basis for the null space of $\mv{G}$. Thus, to guarantee that $\mv{G}\mv{v}_0=\mv{0}$, $\mv{v}_0$ must be in the following form:
\begin{align}\label{eqn:sub v0}
\begin{small}\mv{v}_0=\sqrt{\tilde{P}_0}\tilde{\mv{V}}\tilde{\mv{v}}_0, \end{small}
\end{align}where $\tilde{P}_0=\|\mv{v}_0\|^2$ denotes the transmit power of the information beam, and $\tilde{\mv{v}}_0$ is an arbitrary $(M-K)\times 1$ complex vector of unit norm. It can be shown that to maximize the IR's received power, $\tilde{\mv{v}}_0$ should be aligned to the same direction as the equivalent channel $\tilde{\mv{V}}^H\mv{h}$, i.e., $\tilde{\mv{v}}_0^\ast=\tilde{\mv{V}}^H\mv{h}/\|\tilde{\mv{V}}^H\mv{h}\|$. Given that all the energy beams are aligned to the null space of $\mv{h}$, the achievable secrecy rate of IR is expressed as
\begin{align}\label{eqn:sub rate1}
\begin{small}r_0^{({\rm I})}=\log_2\left(1+\frac{\tilde{P}_0\|\tilde{\mv{V}}^H\mv{h}\|^2}{\sigma_0^2}\right). \end{small}
\end{align}

Next, consider the energy beam $\mv{w}_i$'s. Define the projection matrix as $\mv{T}=\mv{I}-\mv{h}\mv{h}^H/\|\mv{h}\|^2$. Without loss of generality, we can express $\mv{T}=\tilde{\mv{X}}\tilde{\mv{X}}^H$, where $\tilde{\mv{X}}\in \mathbb{C}^{M\times (M-1)}$ satisfies $\tilde{\mv{X}}^H\tilde{\mv{X}}=\mv{I}$. It can be shown that $\tilde{\mv{X}}$ forms an orthogonal basis for the null space of $\mv{h}^H$. Thus, to guarantee that $\mv{h}^H\mv{w}_i=0$, $\forall i$, $\mv{w}_i$ must be in the following form:
\begin{align}\label{eqn:sub wi}
\mv{w}_i=\tilde{\mv{X}}\tilde{\mv{w}}_i, ~~~ i=1,\cdots,d,
\end{align}where $\tilde{\mv{w}}_i$ is an arbitrary $(M-1)\times 1$ complex vector. In this case, the energy harvested at ${\rm ER}_k$ is thus given by\begin{align}\label{eqn:sub1 energy}E_k^{({\rm I})}=\zeta\sum_{i=1}^d|\mv{w}_i^H\mv{g}_k|^2=\zeta\left(\sum_{i=1}^d\tilde{\mv{w}}_i^H\tilde{\mv{G}}_k\tilde{\mv{w}}_i\right), ~~~ 1\leq k \leq K,\end{align}where $\tilde{\mv{G}}_k=\tilde{\mv{X}}^H\mv{G}_k\tilde{\mv{X}}$.

It can be observed from (\ref{eqn:sub rate1}) that to maximize the secrecy rate $r_0^{({\rm I})}$, $\tilde{P}_0$ should be as large as possible. Therefore, the optimal energy beams can be obtained by solving the following problem.
\begin{align*}\mathrm{(P1-Sub1)}: \\ \mathop{\mathtt{Minimize}}_{\{\tilde{\mv{w}}_i\}}
& ~~~ \sum\limits_{i=1}^d\|\tilde{\mv{w}}_i\|^2 \\
\mathtt {Subject \ to} & ~~~ \zeta\left(\sum\limits_{i=1}^d\tilde{\mv{w}}_i^H\tilde{\mv{G}}_k\tilde{\mv{w}}_i\right)\geq \bar{E}_k, ~~~ 1\leq k \leq K.
\end{align*}Define $\tilde{\mv{Q}}=\sum_{i=1}^d\tilde{\mv{w}}_i\tilde{\mv{w}}_i^H$. Then the SDP reformulation of (P1-Sub1) can be expressed as
\begin{align*}\mathrm{(P1-Sub1-SDP)}:\\ \mathop{\mathtt{Minimize}}_{\tilde{\mv{Q}}}
& ~~~ {\rm Tr}(\tilde{\mv{Q}}) \\
\mathtt {Subject \ to} & ~~~ \zeta{\rm Tr}(\tilde{\mv{G}}_k \tilde{\mv{Q}})\geq \bar{E}_k, ~~~ 1\leq k \leq K.
\end{align*}Assuming that (P1-Sub1-SDP) is feasible, then it can be solved via CVX. Denote the optimal solution for this problem as $\tilde{\mv{Q}}^\ast$. Then the optimal solution to (P1-Sub1), denoted by $\tilde{\mv{w}}_i^\ast$'s, can be obtained by the EVD of $\tilde{\mv{Q}}^\ast$, and the corresponding optimal energy beams can be obtained as \begin{align}\label{eqn:energy beam 2}\mv{w}_i^\ast=\tilde{\mv{X}}\tilde{\mv{w}}_i^\ast, ~~~ 1\leq i \leq d,\end{align}where $d={\rm rank}(\tilde{\mv{Q}}^\ast)$. Furthermore, the optimal power allocation for the information beam is given by  $\tilde{P}_0^\ast=\bar{P}-\sum_{i=1}^d\|\mv{w}_i^\ast\|^2$; by assuming $\tilde{P}^\ast> 0$, the optimal information beam is thus obtained as\begin{align}\label{eqn:infor beam2}\mv{v}_0^\ast=\sqrt{\tilde{P}_0^\ast}\tilde{\mv{V}}\tilde{\mv{v}}_0^\ast=\frac{\sqrt{\bar{P}-\sum\limits_{i=1}^d\|\mv{w}_i^\ast\|^2}\tilde{\mv{V}}\tilde{\mv{V}}^H\mv{h}}{\|\tilde{\mv{V}}^H\mv{h}\|}.\end{align}

\subsubsection{Suboptimal Solution \uppercase\expandafter{\romannumeral2}}

\ \ \ \ \

The second suboptimal solution aims to solve problem (P1) with the additional constraints: $\mv{v}_0=\sqrt{\hat{P}_0}\mv{h}/\|\mv{h}\|$ and $\mv{w}_i^H\mv{h}=0$, $\forall i$, where $\hat{P}_0=\|\mv{v}_0\|^2$ denotes the transmit power allocated to the information beam. To reduce the design complexity in this case, we further assume that the energy beams are in the form of $\mv{w}_i=\sqrt{\bar{P}-\hat{P}_0}\mv{w}_i^\ast/\sqrt{\sum_{i=1}^d\|\mv{w}_i^\ast\|^2}$, $1\leq i \leq d$, where $\mv{w}_i^\ast$'s are the energy beams obtained by Suboptimal Solution \uppercase\expandafter{\romannumeral1} in (\ref{eqn:energy beam 2}). Therefore, the secrecy rate of the IR is expressed as
\begin{align}
& \begin{small} r_0^{({\rm II})}=\min\limits_{1\leq k \leq K} ~ \log_2\left(1+\frac{\hat{P}_0\|\mv{h}\|^2}{\sigma_0^2}\right)-\nonumber \end{small} \\ &\begin{small}\log_2\left(1+\frac{\hat{P}_0|\mv{h}^H\mv{g}_k|^2}{\|\mv{h}\|^2\left(\sum\limits_{i=1}^d((\bar{P}-\hat{P}_0)|\mv{g}_k^H\mv{w}_i^\ast|^2/\sum\limits_{j=1}^d\|\mv{w}_j^\ast\|^2)+\sigma_k^2\right)}\right).\end{small}
\end{align}Furthermore, the harvested energy by ${\rm ER}_k$ is expressed as
\begin{align}
\begin{small}E_k^{({\rm II})}=\zeta\left(\frac{\hat{P}_0|\mv{h}^H\mv{g}_k|^2}{\|\mv{h}\|^2}+\sum\limits_{i=1}^d\frac{(\bar{P}_0-\hat{P}_0)|\mv{g}_k^H\mv{w}_i^\ast|^2}{\sum\limits_{j=1}^d\|\mv{w}_j^\ast\|^2}\right), ~ \forall k.\end{small}
\end{align}Define the set of feasible power allocation for the information beam as $\hat{\mathcal{P}}_0=\{\hat{P}_0|E_k^{({\rm II})}\geq \bar{E}_k, \ 1\leq k \leq K, \ 0<\hat{P}_0\leq \bar{P}\}$, which is assumed to be non-empty. To maximize the secrecy rate of the IR subject to individual harvested energy constraints of ERs, we need to solve the following problem. \begin{align*}\mathrm{(P1-Sub2)}:~\mathop{\mathtt{Maximize}}_{\hat{P}_0}
& ~~~ r_0^{({\rm II})} \\
\mathtt {Subject \ to} & ~~~ \hat{P}_0\in \hat{\mathcal{P}}_0.
\end{align*}The optimal solution to (P1-Sub2), denoted by $\hat{P}_0^\ast$, can be obtained by a one-dimension search over the set $\hat{\mathcal{P}}_0$.

\section{Proposed Solutions to Problem (P2)}\label{sec:Proposed Solutions to Problem (P2)}\vspace{-2pt}
In this section, we present the optimal solution as well as two suboptimal solutions to (P2). Similar to (P1), we first study the feasibility of (P2) for a given pair of $\bar{r}_0$ and $\bar{P}$ in the following problem. \vspace{-5pt}
\begin{align*}\mathrm{({\rm P2-NoET})}: ~ \mathop{\mathtt{Maximize}}_{\mv{v}_0,\{\mv{w}_i\}}
& ~~~  0  \\
\mathtt {Subject \ to} & ~~~ r_0 \geq \bar{r}_0 \\
& ~~~ \|\mv{v}_0\|^2+\sum\limits_{i=1}^d\|\mv{w}_i\|^2 \leq \bar{P}.
\end{align*}Note that this problem corresponds to the case where no energy transfer is required, and thus $\mv{w}_i$'s play the only role of AN. The feasibility problem (P2-NoET) can be easily solved by checking whether $\bar{r}_0$ is no larger than the optimal value of (P1-NoET) in Section \ref{sec:Optimal Solution to Problem (P1)}. Without loss of generality, in the rest of this paper, we assume that (P2) is feasible.

Next, we consider another special case of (P2) where no information transmission is required to the IR, i.e., $\bar{r}_0=0$. In this case, $\mv{v}_0=\mv{0}$ and thus (P2) reduces to \vspace{-4pt} \begin{align*}\mathrm{({\rm P2-NoIT})}:~\mathop{\mathtt{Maximize}}_{\{\mv{w}_i\}}
& ~~~  \sum\limits_{k=1}^K \mu_k\zeta \left(\sum\limits_{i=1}^d|\mv{w}_i^H\mv{g}_k|^2\right)  \\
\mathtt {Subject \ to} & ~~~
\sum\limits_{i=1}^d\|\mv{w}_i\|^2 \leq \bar{P}.
\end{align*}

Let $\psi$ and $\mv{\eta}$ denote the maximum eigenvalue and its corresponding unit-norm eigenvector of the matrix $\sum_{k=1}^K\mu_k\zeta\mv{g}_k\mv{g}_k^H$, respectively. From \cite{Rui11}, the optimal value of problem (P2-NoIT) is known to be\begin{align}\label{eqn:max energy}
\begin{small}E_{{\rm max}}=\psi\bar{P},\vspace{-5pt}\end{small}\end{align}which is achieved by $\mv{w}_i^\ast=\sqrt{p_i}\mv{\eta}$, $1\leq i \leq d$, for any set of $p_i$'s satisfying $p_i\geq 0$, $\forall i$, and $\sum_{i=1}^dp_i=\bar{P}$. In other words, the optimal solution to problem (P2-NoIT) is to align all the energy beams to the same direction as $\mv{\eta}$, a technique known as ``energy beamforming'' \cite{Rui11}.\vspace{-5pt}


\subsection{Optimal Solution to (P2)}\label{sec:Optimal Solution to Problem (P2)}

In this subsection, we propose the optimal solution to (P2). Similar to (P1), (P2) can be reformulated into two sub-problems shown in the sequel. First, similar to (P1), it can be shown that there always exists a SINR constraint $\gamma_0>0$ at IR such that the following problem
\begin{align*}\mathrm{(P2.1)}: \\ ~~~ \mathop{\mathtt{Maximize}}_{\mv{v}_0,\{\mv{w}_i\}}
& ~~~ \sum\limits_{k=1}^K \mu_k\zeta \left(|\mv{v}_0^H\mv{g}_k|^2+\sum\limits_{i=1}^d|\mv{w}_i^H\mv{g}_k|^2\right) \\
\mathtt {Subject \ to} & ~~~ \frac{|\mv{v}_0^H\mv{h}|^2}{\sum\limits_{i=1}^d|\mv{w}_i^H\mv{h}|^2+\sigma_0^2}\geq \gamma_0, \\ & ~~~ \frac{|\mv{v}_0^H\mv{g}_k|^2}{\sum\limits_{i=1}^d|\mv{w}_i^H\mv{g}_k|^2+\sigma_k^2}\leq \frac{1+\gamma_0}{2^{\bar{r}_0}}-1, \ \forall k,  \\ & ~~~
\|\mv{v}_0\|^2+\sum\limits_{i=1}^d\|\mv{w}_i\|^2 \leq \bar{P},
\end{align*}has the same optimal solution to (P2). Furthermore, let $g_2(\gamma_0)$ denote the optimal value of problem (P2.1) with a given $\gamma_0>0$, then the optimal value of problem (P2) is the same as that of the following problem
\begin{align*}\mathrm{(P2.2)}:~\mathop{\mathtt{Maximize}}_{\gamma_0>0}
& ~~~ g_2(\gamma_0).
\end{align*}

%
%
%

Let $\gamma_0^\ast$ denote the optimal solution to problem (P2.2). From the above results, with $\gamma_0=\gamma_0^\ast$, it follows that problems (P2) and (P2.1) have the same optimal solution. Therefore, similar to (P1), problem (P2) can be solved in the following two steps: First, given any $\gamma_0>0$, we solve problem (P2.1) to find $g_2(\gamma_0)$; then, we solve problem (P2.2) to obtain the optimal $\gamma_0^\ast$ by a one-dimension search over $\gamma_0>0$. In the rest of this subsection, we focus on solving (P2.1), which is non-convex.

Define $\mv{S}=\mv{v}_0\mv{v}_0^H$ and $\mv{Q}=\sum_{i=1}^d\mv{w}_i\mv{w}_i^H$. Then the SDR of (P2.1) can be expressed as
\begin{align}\mathrm{(P2.1-SDR)}: \nonumber \\ \mathop{\mathtt{Maximize}}_{\mv{S},\mv{Q}}
& ~~~ \sum\limits_{k=1}^K \mu_k\zeta\left({\rm Tr}(\mv{G}_k\mv{S})+{\rm Tr}(\mv{G}_k\mv{Q})\right) \nonumber \\
\mathtt {Subject \ to} & ~~~ {\rm Tr}(\mv{H}\mv{S})\geq \gamma_0 \left({\rm Tr}(\mv{H}\mv{Q})+\sigma_0^2\right), \label{eqn:constraint1}\\ & ~~~ \frac{{\rm Tr}(\mv{G}_k\mv{S})}{\gamma_e} \leq {\rm Tr}(\mv{G}_k\mv{Q})+\sigma_k^2, \ \forall k,  \label{eqn:constraint2}\\ & ~~~
{\rm Tr}(\mv{S})+{\rm Tr}(\mv{Q}) \leq \bar{P}, \label{eqn:constraint3}\\ & ~~~ \mv{S}\succeq \mv{0}, ~~~ \mv{Q}\succeq \mv{0}, \label{eqn:constraint4}
\end{align}where $\mv{H}=\mv{h}\mv{h}^H$, $\mv{G}_k=\mv{g}_k\mv{g}_k^H$, and $\gamma_e=(1+\gamma_0)/2^{\bar{r}_0}-1$. Similar to (P1.1), if the optimal solution to problem (P2.1-SDR), denoted by $\mv{S}^\ast$ and $\mv{Q}^\ast$, satisfies ${\rm rank}(\mv{S}^\ast)= 1$, then the optimal information beam $\mv{v}_0^\ast$ and energy beam $\mv{w}_i^\ast$'s, $i=1,\cdots,d$ ($d={\rm rank}(\mv{Q}^\ast)$), for problem (P2.1) can be obtained from the EVDs of $\mv{S}^\ast$ and $\mv{Q}^\ast$, respectively; otherwise if ${\rm rank}(\mv{S}^\ast)>1$, the optimal value of problem (P2.1-SDR) only serves as an upper bound on that of problem (P2.1). In the following, we show that there always exists an optimal solution with ${\rm rank}(\mv{S}^\ast)=1$ for (P2.1-SDR).

Since (P2.1-SDR) is convex, it can be solved by CVX. Suppose that the resulting optimal solution $(\mv{S}^\ast,\mv{Q}^\ast)$ satisfies ${\rm rank}(\mv{S}^\ast)>1$. Let $E^\ast$ denote the optimal value of (P2.1-SDR) achieved by $(\mv{S}^\ast,\mv{Q}^\ast)$. Then consider the following problem.\begin{align*}\mathrm{(P2.1-SDR-New)}: \ \ \ \ \ \ \ \ \ \ \ \ \ \ \ \ \ \ \ \ \ \ \ \ \ \ \ \ \ \ \ \ \ \ \end{align*} \vspace{-19pt} \begin{align}\mathop{\mathtt{Maximize}}_{\mv{S},\mv{Q}}
& ~~~ \frac{{\rm Tr}(\mv{H}\mv{S})}{{\rm Tr} (\mv{H}\mv{Q})+\sigma_0^2} \nonumber \\
\mathtt {Subject \ to} & ~~~ (\ref{eqn:constraint2}), ~ (\ref{eqn:constraint3}), ~ (\ref{eqn:constraint4}), \label{21} \\ & ~~~ \sum\limits_{k=1}^K\mu_k\zeta({\rm Tr}(\mv{G}_k\mv{S})+{\rm Tr}(\mv{G}_k\mv{Q})) \geq E^\ast. \label{22}
\end{align}It can be shown that with $(\mv{S}^\ast,\mv{Q}^\ast)$, the resulting value of (P2.1-SDR-New) is $\gamma_0$. Let $(\bar{\mv{S}}^\ast,\bar{\mv{Q}}^\ast)$ denote the optimal solution to (P2.1-SDR-New), and $\bar{\gamma}_0$ be the optimal value. Then we have $\bar{\gamma}_0\geq  \gamma_0$. As a result, with the new solution $(\bar{\mv{S}}^\ast,\bar{\mv{Q}}^\ast)$, all the constraints in (P2.1-SDR), i.e, (\ref{eqn:constraint1})--(\ref{eqn:constraint4}), are satisfied, and the optimal value $E^\ast$ is still achieved. Therefore, the optimal solution to (P2.1-SDR-New) is also optimal to (P2.1-SDR). Furthermore, similar to (P1.1-SDR) (see Proposition \ref{proposition4}), it can be shown that there always exists a rank-one optimal covariance solution for $\mv{S}$ to (P2.1-SDR-New). Therefore, we can conclude that there always exists an optimal solution $(\mv{S}^\ast,\mv{Q}^\ast)$ to (P2.1-SDR) with ${\rm rank}(\mv{S}^\ast)=1$, and there is no loss of optimality for (P2.1) due to the rank relaxation on $\mv{S}$ in (P2.1-SDR).


At last, similar to Proposition \ref{proposition3} and Corollaries \ref{corollary1} and \ref{corollary2} in Section \ref{sec:Optimal Solution to Problem (P1)}, it can be shown that if $|\bar{\Psi}|\geq \min(M-1,K)$, where $\bar{\Psi}$ is still given in (\ref{psi2}) but with $\beta_k^\ast$'s denoting the optimal dual solution to (P2.1-SDR) corresponding to (\ref{eqn:constraint2}), ${\rm rank}(\mv{S}^\ast)=1$ is always true for (P2.1-SDR).

\begin{remark}\label{remark2}
It is worth noting that in \cite{Rui13}, a similar problem to (P2) has been studied without considering the secret information transmission to IRs, which in the case of one single IR under the same setup of this paper is equivalent to the following simplified problem of (P2).\begin{align*}\mathrm{({\rm P2-NoSC})}: \\ \mathop{\mathtt{Maximize}}_{\mv{v}_0,\{\mv{w}_i\}}
& ~~~  \sum\limits_{k=1}^K \mu_k\zeta \left(|\mv{v}_0^H\mv{g}_k|^2+\sum\limits_{i=1}^d|\mv{w}_i^H\mv{g}_k|^2\right)  \\
\mathtt {Subject \ to} & ~~~ \log_2\left(1+\frac{|\mv{v}_0^H\mv{h}|^2}{\sum\limits_{i=1}^d|\mv{w}_i^H\mv{h}|^2+\sigma_0^2}\right) \geq \tilde{r}_0 , \\ & ~~~
\|\mv{v}_0\|^2+\sum\limits_{i=1}^d\|\mv{w}_i\|^2 \leq \bar{P},
\end{align*}where $\tilde{r}_0$ denotes the given rate constraint for IR (without secrecy consideration). Note that an important result shown in \cite[Proposition 3.1]{Rui13} is that under the assumption of independent user channels, the optimal solution to problem (P2-NoSC) should satisfy that $\mv{w}_i^\ast=\mv{0}$, $\forall $ $1\leq i \leq d$, i.e., no energy beam is needed, while only the information beam $\mv{v}_0$ is adjusted for achieving the information rate target for IR and yet maximizing the weighted sum-energy transferred to ERs. However, with the newly introduced secrecy rate constraint in (P2), energy beams are in general needed in the optimal solution for the SWIPT system with arbitrary user channels, since they carry AN to reduce the information rate eavesdropped by ERs, especially when ERs have better channels than IR from the transmitter.

\end{remark}

\subsection{Suboptimal Solutions to (P2)}\label{sec:Suboptimal Solutions to Problem (P2)}
Similarly as for (P1), in this subsection, we propose two suboptimal solutions for (P2), which can be designed with lower complexity. Similar to the two suboptimal solutions proposed in Section \ref{sec:Suboptimal Solutions to Problem (P1)} for (P1), in the following we assume that the energy beams $\mv{w}_i$ ($i=1,\cdots,d$) in (P2) are all in the null space of the IR's channel $\mv{h}$. Furthermore, the information beam $\mv{v}_0$ is aligned to the null space of the ERs' channels $\mv{G}=[\mv{g}_1,\cdots,\mv{g}_K]^H$ in the first suboptimal solution, while it is in the same direction as $\mv{h}$ for the second suboptimal solution. Again, the first suboptimal solution is only applicable when $K<M$. In the following, we present the two suboptimal solutions in more details.

\subsubsection{Suboptimal Solution \uppercase\expandafter{\romannumeral1}}

\ \ \ \ \ \ \ \

Supposing that $K<M$, then the first suboptimal solution aims to solve problem (P2) with the additional constraints: $\mv{v}_0^H\mv{g}_k=0$, $\forall k$, and $\mv{w}_i^H\mv{h}=0$, $\forall i$. To satisfy the above constraints, $\mv{v}_0$ and $\mv{w}_i$'s should be in the form of (\ref{eqn:sub v0}) and (\ref{eqn:sub wi}), respectively. Furthermore, with $\tilde{\mv{v}}_0^\ast=\tilde{\mv{V}}^H\mv{h}/\|\tilde{\mv{V}}^H\mv{h}\|$, the secrecy rate of the IR under this scheme is given in (\ref{eqn:sub rate1}). Since all ERs cannot harvest energy from the information beam, to maximize the weighted sum-energy transferred to ERs, $\tilde{P}_0$ should be set to the smallest power to make $r_0^{({\rm I})}=\bar{r}_0$. It thus follows \begin{align}\tilde{P}_0^\ast=\frac{(2^{\bar{r}_0}-1)\sigma_0^2}{\|\tilde{\mv{V}}^H\mv{h}\|^2}.\end{align}To summarize, in this suboptimal solution, we have
\begin{align}\label{eqn:sub1 v0}
\begin{small}\mv{v}_0^\ast=\sqrt{\tilde{P}_0^\ast}\tilde{\mv{V}}\tilde{\mv{v}}_0^\ast=\frac{\sqrt{(2^{\bar{r}_0}-1)\sigma_0^2}\tilde{\mv{V}}\tilde{\mv{V}}^H\mv{h}}{\|\tilde{\mv{V}}^H\mv{h}\|^2}. \end{small}
\end{align}

Notice that the harvested energy of ${\rm ER}_k$ under this suboptimal solution is in the form of (\ref{eqn:sub1 energy}). Thus, to find the optimal $\tilde{\mv{w}}_i^\ast$'s, we need to solve the following problem.
\begin{align*}\mathrm{(P2-Sub1)}:~\mathop{\mathtt{Maximize}}_{\{\tilde{\mv{w}}_i\}}
& ~~~  \sum\limits_{k=1}^K \mu_k\zeta \left(\sum\limits_{i=1}^d\tilde{\mv{w}}_i^H\tilde{\mv{G}}_k\tilde{\mv{w}}_i\right)  \\
\mathtt {Subject \ to} & ~~~
\sum\limits_{i=1}^d\|\tilde{\mv{w}}_i\|^2 \leq \bar{P}-\tilde{P}_0^\ast.
\end{align*}Note that in the above, we have assumed $\bar{P}\geq \tilde{P}^\ast_0$. Let $\tilde{\psi}$ and $\tilde{\mv{\eta}}$ denote the maximum eigenvalue and its corresponding unit-norm eigenvector of the matrix $\sum_{k=1}^K\mu_k\zeta\tilde{\mv{G}}_k$, respectively. Similar to problem (P2-NoIT), it can be shown that the optimal value of problem (P2-Sub1) is $\tilde{E}_{{\rm max}}=\tilde{\psi}(\bar{P}-\tilde{P}_0^\ast)$, which is achieved by $\tilde{\mv{w}}_i^\ast=\sqrt{\tilde{p}_i}\tilde{\mv{\eta}}$, $1\leq i \leq d$, for any set of $\tilde{p}_i$'s satisfying $\sum_{i=1}^d\tilde{p}_i=\bar{P}-\tilde{P}_0^\ast$. In practice, it is preferable to send only one energy beam to minimize the complexity of beamforming implementation at the transmitter; thus, we have
\begin{align}\label{eqn:sub1 wi}
\begin{small}\mv{w}_i^\ast=\left\{\begin{array}{ll}\sqrt{\bar{P}-\tilde{P}_0^\ast}\tilde{\mv{X}}\tilde{\mv{\eta}}, & {\rm if} \ i=1, \\ 0, & {\rm otherwise}.\end{array}\right. \end{small}
\end{align}Note that unlike Suboptimal Solution \uppercase\expandafter{\romannumeral1} for (P1) shown in (\ref{eqn:energy beam 2}), one single energy beam is sufficient in this case.

\subsubsection{Suboptimal Solution \uppercase\expandafter{\romannumeral2}}

\ \ \ \ \ \ \

The second suboptimal solution aims to solve problem (P2) with the additional constraints: $\mv{v}_0=\sqrt{\hat{P}_0}\mv{h}/\|\mv{h}\|$ and $\mv{w}_i^H\mv{h}=0$, $\forall i$, where $\hat{P}_0=\|\mv{v}_0\|^2$ denotes the transmit power of the information beam. Similar to (\ref{eqn:sub1 wi}), it can be shown that the optimal energy beams should be in the following form:
\begin{align}
\begin{small}\mv{w}_i=\left\{\begin{array}{ll}\sqrt{\bar{P}-\hat{P}_0}\tilde{\mv{X}}\tilde{\mv{\eta}}, & {\rm if} \ i=1, \\ 0, & {\rm otherwise}.\end{array}\right. \end{small}
\end{align}Next, we derive the optimal power allocation for $\hat{P}_0$, denoted by $\hat{P}_0^\ast$. It can be shown that the secrecy rate of IR in this scheme is given by
\begin{align}
& \begin{small}r_0^{({\rm II})}=\min_{1\leq k \leq K} ~  \log_2\left(1+\frac{\hat{P}_0\|\mv{h}\|^2}{\sigma_0^2}\right)-\nonumber \end{small} \\ & \begin{small}\log_2\left(1+\frac{\hat{P}_0|\mv{h}^H\mv{g}_k|^2}{\|\mv{h}\|^2((\bar{P}-\hat{P}_0)|\tilde{\mv{\eta}}^H\tilde{\mv{X}}^H\mv{g}_k|^2+\sigma_k^2)}\right).\end{small} \label{eqn:sub rate2}
\end{align}Define the set of feasible power allocation as $\hat{\mathcal{P}}_0=\{\hat{P}_0|r_0^{({\rm II})}\geq \bar{r}_0, 0<\hat{P}_0\leq \bar{P}\}$, which is assumed to be non-empty. To maximize the weighted sum-energy transferred to ERs subject to the secrecy rate constraint of the IR, we need to solve the following power allocation problem.
\begin{align*}\mathrm{(P2-Sub2)}:\\ \mathop{\mathtt{Maximize}}_{\hat{P}_0}
& ~  \begin{small}\frac{\sum\limits_{k=1}^K\mu_k\zeta\hat{P}_0|\mv{h}^H\mv{g}_k|^2}{\|\mv{h}\|^2}+\sum\limits_{k=1}^K\mu_k\zeta(\bar{P}-\hat{P}_0)|\tilde{\mv{\eta}}^H\tilde{\mv{X}}^H\mv{g}_k|^2  \end{small} \\
\mathtt {Subject \ to} & ~
\hat{P}_0\in \hat{\mathcal{P}}_0.
\end{align*}

Let $\hat{P}_0^{{\rm min}}$ and $\hat{P}_0^{{\rm max}}$ denote the minimal and maximal elements in the set $\hat{\mathcal{P}}_0$, respectively. Then it can be shown that the optimal power allocation to (P2-Sub2) is given by
\begin{align}
\begin{small}\hat{P}_0^\ast=\left\{\begin{array}{ll}\hat{P}_0^{{\rm max}}, & {\rm if} \ \frac{\sum\limits_{k=1}^K\mu_k|\mv{h}^H\mv{g}_k|^2}{\|\mv{h}\|^2}\geq \sum\limits_{k=1}^K\mu_k|\tilde{\mv{\eta}}^H\tilde{\mv{X}}^H\mv{g}_k|^2, \\ \hat{P}_0^{{\rm min}}, & {\rm otherwise}.\end{array}\right. \end{small}
\end{align}

\section{Numerical Example}\label{sec:Numerical Results}

In this section, we provide numerical examples to validate our results. In the first numerical example, we consider a MISO SWIPT system in which Tx is equipped with $M=4$ antennas, and there are $K=3$ ERs.\footnote{Note that $K<M$ in this example; thus, Suboptimal Solution \uppercase\expandafter{\romannumeral1} for (P1) or (P2) is feasible.} We assume that the signal attenuation from Tx to all ERs is $30$dB corresponding to an equal distance of $1$ meter, i.e., $\rho_{g_k}^2=-30$dB, $1\leq k \leq K$, and that from Tx to the IR is $70$dB corresponding to a distance of $20$ meters, i.e., $\rho_h^2=-70$dB. The channel vectors $\mv{g}_k$'s and $\mv{h}$ are randomly generated from i.i.d. Rayleigh fading with the respective average power values specified as above. We set $\bar{P}=1$Watt (W) or $30$dBm, $\zeta=50\%$, and $\sigma_k^2=-50$dBm, $0\leq k \leq K$. We also set $\mu_k=1$, $1\leq k \leq K$ in (P2); thus, the sum-energy harvested by all ERs is considered.


\begin{figure}
\begin{center}
\subfigure[$\log_2\left(\frac{1+g_1(\gamma_e)}{1+\gamma_e}\right)$ versus $\gamma_e$.]{\scalebox{0.52}{\includegraphics*{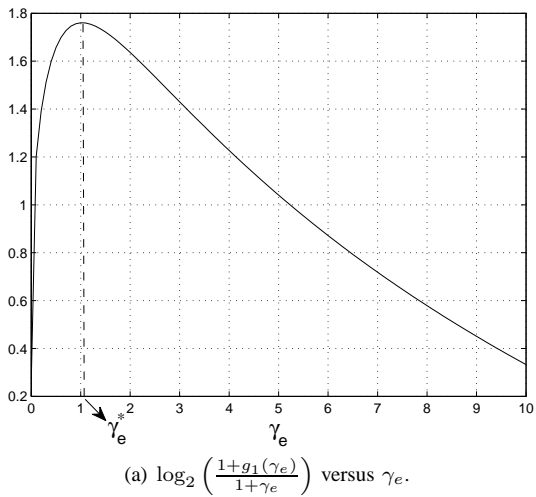}}}
\subfigure[$g_2(\gamma_0)$ versus $\gamma_0$.]{\scalebox{0.52}{\includegraphics*{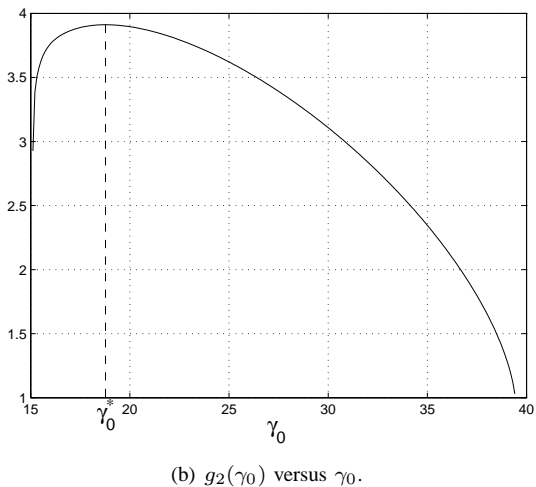}}}
\caption{Uniqueness of $\gamma_e^\ast$ in (P1.2) and $\gamma_0^\ast$ in (P2.2).}\label{fig3}
\end{center}\vspace{-15pt}

\end{figure}

First, we illustrate the two-stage optimization approach to solve (P1) and (P2), as proposed in Section \ref{sec:Optimal Solution to Problem (P1)} and Section \ref{sec:Optimal Solution to Problem (P2)}, respectively. Figs. \ref{fig3}(a) and \ref{fig3}(b) show the plot of $\log_2\left(\frac{1+g_1(\gamma_e)}{1+\gamma_e}\right)$ over $\gamma_e>0$ in (P1) with the individual harvested energy constraints $\bar{E}_k$'s of ERs set as $1$mW, $\forall k$, and the plot of $g_2(\gamma_0)$ over $\gamma_0>0$ in (P2) with the secrecy rate constraint of IR $\bar{r}_0$ set as $4$ bits per second (bps) per Hz, respectively. It is observed that in this particular setup (and many others used in our simulations for which the results are not shown here due to the space limitation) there is only one single maximum point in each of the two plotted functions; however, we are unable yet to verify analytically the concavity or even the quasi-concavity of these two functions.

Next, we adopt the Rate-Energy (R-E) region \cite{Liang13}, which consists of all the achievable (secrecy) rate and harvested energy pairs for a given sum-power constraint $\bar{P}$, to compare the performances of the optimal and suboptimal solutions for (P1) proposed in Section \ref{sec:Proposed Solutions to Problem (P1)}. Note that in general the R-E region in our setup is a $(K+1)$-dimension region given one IR and $K$ ERs. For simplicity, in the following we assume that all ERs have identical energy constraints, denoted by $E\geq 0$; thus, the R-E region reduces to a two-dimension region, which is given by
\begin{align}\begin{small}\mathcal{C}_{{\rm R-E \ (P1)}}\triangleq \bigcup\limits_{\|\mv{v}_0\|^2+\sum\limits_{i=1}^d\|\mv{w}_i\|^2\leq \bar{P}} \bigg\{(R,E): R \leq r_0, E\leq
E_k, \forall k\bigg\},\end{small}\label{eqn:R-E region (P1)}\end{align}where $r_0$ and $E_k$ are given in (\ref{eqn:secrecy rate}) and (\ref{eqn:harvested energy}), respectively. Note that by solving (P1) with $\bar{E}_k=\bar{E}$, $\forall k$, and by changing the values of $\bar{E}$, we can characterize the boundary of the resulting R-E region defined in (\ref{eqn:R-E region (P1)}).

\begin{figure}
\begin{center}
 \scalebox{0.55}{\includegraphics*{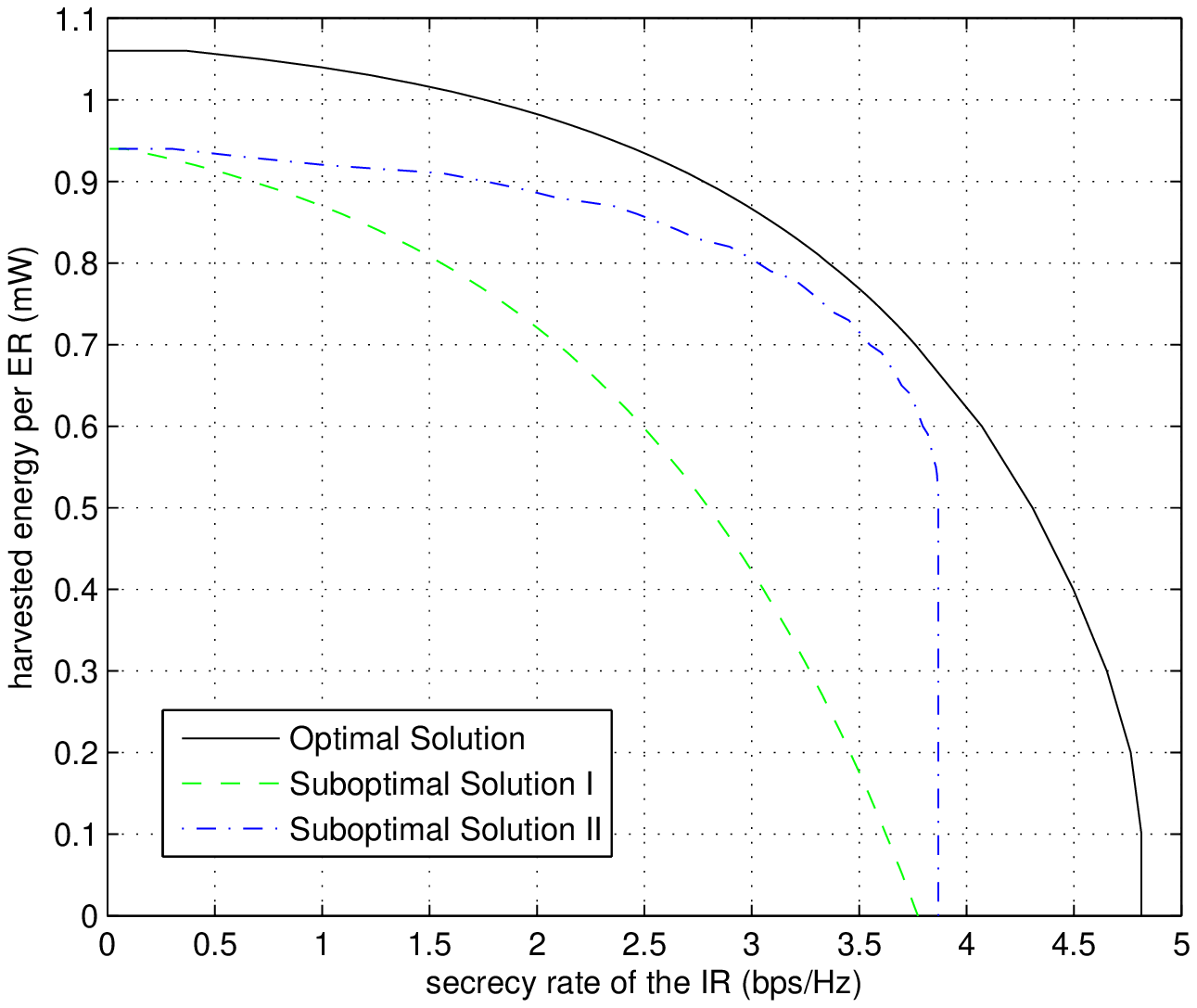}}
 \end{center}
\caption{Achievable R-E region by the proposed solutions for (P1).} \label{fig5}\vspace{-10pt}
\end{figure}

\begin{figure}
\begin{center}
 \scalebox{0.55}{\includegraphics*{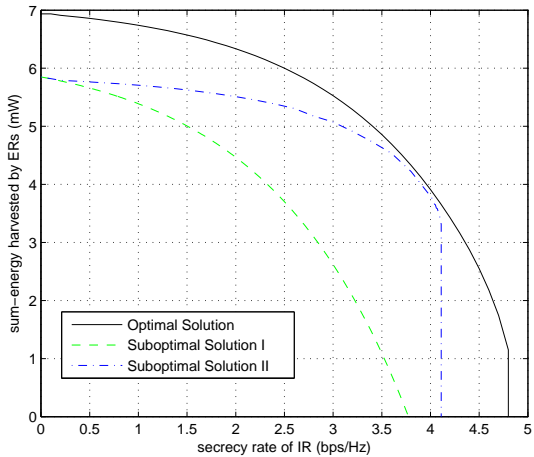}}
 \end{center}
\caption{Achievable R-E region by the proposed solutions for (P2).} \label{fig4}\vspace{-10pt}
\end{figure}

Fig. \ref{fig5} compares the R-E regions achieved by different information and energy beamforming solutions for (P1). It is observed that the optimal solution achieves the best R-E trade-offs. Moreover, Suboptimal Solution \uppercase\expandafter{\romannumeral2} is observed to perform better than Suboptimal Solution \uppercase\expandafter{\romannumeral1}, especially when the achievable secrecy rate for the IR is large. However, it is worth noting that Suboptimal Solution \uppercase\expandafter{\romannumeral1} has the lowest complexity among the three proposed solutions. Notice that for this suboptimal solution, closed-form expressions of the optimal information/energy beamforming vectors and their power allocation are given in (\ref{eqn:infor beam2}) and (\ref{eqn:energy beam 2}), respectively. Furthermore, with no information leakage to ERs with the designed information beamforming, i.e., $\mv{v}_0^H\mv{g}_k=0$, $\forall k$, there is no need to design a special codebook for the secrecy information signal at the transmitter \cite{Liu10}, \cite{Oggier11}.

Next, we compare the performances of the optimal and suboptimal solutions proposed for (P2) in Section \ref{sec:Proposed Solutions to Problem (P2)}. In this case, the R-E region in general consists of all pairs of the achievable (secrecy) rate for IR and the harvested sum-energy for ERs for a given sum-power constraint $\bar{P}$. Specifically, the R-E region is defined as
\begin{align}\begin{small}\mathcal{C}_{{\rm R-E \ (P2)}}\triangleq \bigcup\limits_{\|\mv{v}_0\|^2+\sum\limits_{i=1}^d\|\mv{w}_i\|^2\leq \bar{P}} \bigg\{(R,E): R \leq r_0, E\leq
\sum\limits_{k=1}^KE_k\bigg\}.\end{small}\label{eqn:R-E region}\end{align}Note that by solving problem (P2) with different values of $\bar{r}_0$, we can characterize the boundary of the resulting R-E region defined in (\ref{eqn:R-E region}).

Fig. \ref{fig4} shows three R-E regions achieved by different information and energy beamforming schemes for (P2). It is observed that similar to Fig. \ref{fig5}, the optimal solution achieves the best R-E trade-offs, while Suboptimal Solution \uppercase\expandafter{\romannumeral2} works better than Suboptimal Solution \uppercase\expandafter{\romannumeral1}. From the results in both Figs. \ref{fig5} and \ref{fig4}, it is inferred that in general it is more beneficial to align the information beam $\mv{v}_0$ to the same direction as the IR's channel $\mv{h}$ rather than to the null space of ERs' channels in both (P1) and (P2).

\begin{figure}
\begin{center}
 \scalebox{0.55}{\includegraphics*[479pt,237pt][208pt,524pt]{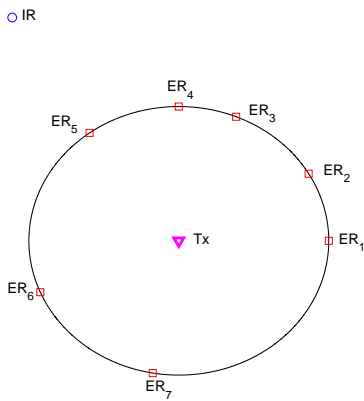}}
 \end{center}
\caption{Locations of the IR and ERs.} \label{fig6}\vspace{-10pt}
\end{figure}

In the second numerical example, we consider a MISO SWIPT system as shown in Fig. \ref{fig6}, where there are $K=7$ ERs and the IR is equipped with $M=9$ antennas. In this example, we use the far-field uniform linear antenna array \cite{Luo07} to model the channels. Specifically,
\begin{align}
& \mv{h}=\rho_{h}\times [1,e^{j\theta_0},\cdots,e^{j(M-1)\theta_0}]^T, \\
& \mv{g}_k=\rho_{g_k}\times [1,e^{j\theta_k},\cdots,e^{j(M-1)\theta_k}]^T, ~~ k=1,\cdots,K,
\end{align}{where $\rho_h^2=-70$dB, $\rho_{g_k}^2=-30$dB, $1\leq k \leq K$, and $\theta_n=-\frac{2\pi d \sin(\phi_n)}{\lambda}$, $n=0,1,\cdots,K$, with $d$ denoting the spacing between successive antenna elements at the Tx, $\lambda$ denoting the carrier wavelength, and $\phi_0$ denoting the direction of the IR to Tx, and $\phi_n$ for that of ${\rm ER}_n$ to Tx, $1\leq n \leq K$. We set $d=\frac{\lambda}{2}$, and $\{\phi_0,\phi_1,\cdots,\phi_7\}=\{\frac{11\pi}{16},0,\frac{\pi}{6},\frac{3\pi}{8},\frac{\pi}{2},\frac{45\pi}{64},\frac{9\pi}{8},\frac{13\pi}{9}\}$. The other parameters are set the same as those in the first numerical example.

\begin{figure}
\begin{center}
 \scalebox{0.55}{\includegraphics*{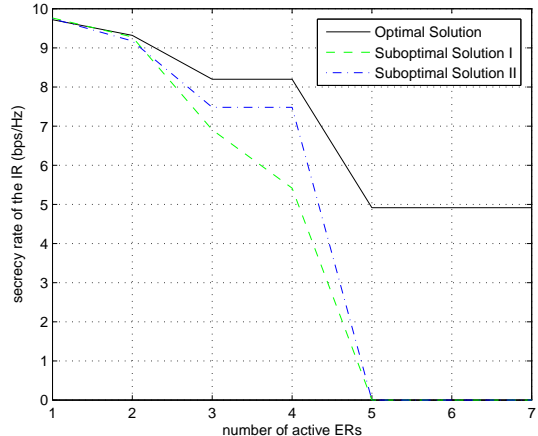}}
 \end{center}
\caption{The secrecy rate of the IR over the number of active ERs with given per-ER energy constraint, $\bar{E}_k=0.8$mW.} \label{fig7}\vspace{-10pt}
\end{figure}

In this example, we activate one more ER at each time (from ${\rm ER}_1$ to ${\rm ER}_7$). Fig. \ref{fig7} shows the secrecy rate achieved by our proposed optimal and suboptimal algorithms for (P1) against the number of active ERs in the system with $\bar{E}_k=0.8$mW, $\forall k$. It is observed that with more ERs (or eavesdroppers) activated, the achievable secrecy rate for the IR is reduced for all proposed algorithms. It is also observed that when ${\rm ER}_5$ is activated, there is a drastic decrease in the secrecy rate achieved for the IR. This is because as shown in Fig. \ref{fig6}, ${\rm ER}_5$ is aligned in a direction very close to that of the IR ($\phi_5\approx \phi_0$) but with higher channel power due to shorter distance from the Tx. Furthermore, it is observed that after ${\rm ER}_5$ is activated, both Suboptimal Solutions I and II achieve zero secrecy rate. The reason is as follows. Note that for both of these two suboptimal solutions, the energy beams $\mv{w}_i$'s are aligned into the null space of $\mv{h}$, i.e., $\mv{w}_i^H\mv{h}=0$, $\forall i$. However, in this example the direction of $\mv{g}_5$ is very close to that of $\mv{h}$. It thus follows that $\mv{w}_i^H\mv{g}_5\approx 0$, $\forall i$. In other words, the energy beams cannot play the role of AN to reduce ${\rm ER}_5$'s SINR in this case. Moreover, since ${\rm ER}_5$ has better channel than the IR, the achievable secrecy rate becomes close to zero.

\begin{figure}
\begin{center}
 \scalebox{0.55}{\includegraphics*{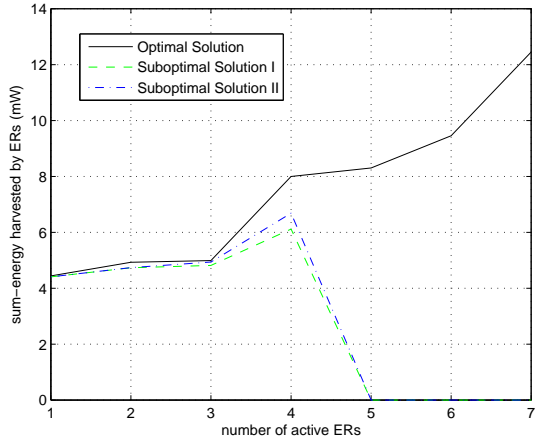}}
 \end{center}
\caption{The sum-energy harvested by ERs over the number of active ERs with given secrecy rate constraint for the IR, $\bar{r}_0=4$bps/Hz.} \label{fig8}\vspace{-10pt}
\end{figure}

Fig. \ref{fig8} shows the sum-energy harvested by all ERs by the proposed optimal and suboptimal algorithms for (P2) against the number of active ERs with $\bar{r}_0=4$bps/Hz. It is observed that with more ERs, the sum-energy harvested is increased in all cases. Furthermore, it is observed that when $K\leq 4$, the performance of both Suboptimal Solutions I and II is very close to that of the optimal solution. However, after ${\rm ER}_5$ is activated, both of the suboptimal solutions achieve zero sum-energy because the secrecy rate constraint cannot be satisfied in (P2) due to the same reason as given for Fig. \ref{fig7}. From the results in Figs. \ref{fig7} and \ref{fig8}, it is inferred that even in the challenging scenario where one ER is aligned in a direction very close to (but not the same as) the IR, our proposed optimal algorithm still achieves good performance thanks to the jointly optimized beamforming and power allocation design. However, in this case both the two suboptimal solutions cannot perform well. \vspace{-10pt}

\section{Conclusion}\label{sec:Conclusion}
This paper is an initial attempt to address the important issue of physical-layer security in an emerging new type of wireless network with simultaneous wireless information and power transfer (SWIPT). Under the MISO setup with one single IR and multiple ERs, the joint information and energy beamforming design is investigated for the first time to maximize the secret information transmission rate to the IR and yet guarantee the target amount of energy transferred to ERs, or vice verse. We propose efficient algorithms to optimally solve the formulated non-convex design problems by applying the technique of semidefinite relaxation (SDR), and show that SDR has no loss of optimality by exploiting the particular structures of the studied problems. Two suboptimal beamforming designs with lower complexity are also presented, and their performances are compared against that of the optimal solution in terms of achievable (secrecy) rate-energy trade-off. Our results reveal interesting new insights to optimally managing the interference in a secrecy SWIPT system since it plays both the roles of an energy-carrying signal for wireless energy transfer as well as an artificial noise (AN) to protect the secrecy information transmission.

\begin{appendix}

\vspace{10pt}

\subsection{Proof of Lemma \ref{lemma7}}\label{appendix6}

First, we show that $\lambda^\ast>0$. Let $\mv{A}_1^\ast$, $\mv{B}_1^\ast$, and $\xi_1^\ast$ be given in (\ref{eqn:A2}), (\ref{eqn:B2}), and (\ref{eqn:pi2}), respectively, by substituting the optimal dual solution of problem (P1.1-SDR-Eqv). To ensure that the Lagrangian in (\ref{eqn:Lagrangian2}) is bounded from above such that the dual function exists, it follows that
\begin{align}\label{eqn:condition1}
\begin{small} \mv{A}_1^\ast\preceq \mv{0}, ~~~ \mv{B}_1^\ast\preceq \mv{0}, ~~~ \xi_1^\ast \leq 0. \end{small}
\end{align}According to (\ref{eqn:Lagrangian2}), the dual problem of (P1.1-SDR-Eqv) can be expressed as
\begin{align*}\mathrm{(P1.1-SDR-Eqv-Dual)}:\ \ \ \ \ \ \ \ \ \ \ \ \ \ \ \ \ \ \ \ \ \ \ \ \ \ \ \ \ \ \ \ \end{align*}\begin{align*}\mathop{\mathtt{Minimize}}_{\lambda,\{\beta_k\},\{\alpha_k\},\theta}
& ~~~  \lambda  \\
\mathtt {Subject \ to} & ~~~
\mv{A}_1\preceq \mv{0}, ~~~ \mv{B}_1\preceq \mv{0}, ~~~ \xi_1 \leq 0,
\\ & ~~~ \beta_k\geq 0, ~~~ \alpha_k\geq 0, ~~~ \forall k, ~~~ \theta\geq 0.
\end{align*}Since the duality gap between (P1.1-SDR-Eqv) and its dual problem (P1.1-SDR-Eqv-Dual) is zero, $\lambda^\ast$ is equal to the optimal value of (P1.1-SDR-Eqv). Therefore, we have $\lambda^\ast>0$.

Next, we show that $\theta^\ast>0$ by contradiction. Define $\phi=\{k|(\beta_k^\ast)^2+(\alpha_k^\ast)^2>0, ~ k=1,\cdots,K\}$. In the following, we discuss two cases in each of which we show that (\ref{eqn:condition1}) cannot be true if $\theta^\ast=0$.

\subsubsection{The case of $\phi=\emptyset$}
Suppose that $\theta^\ast=0$. In this case, we have $\mv{A}_1^\ast=\mv{H}\succeq \mv{0}$, which contradicts to (\ref{eqn:condition1}). Thus, in this case, $\theta^\ast>0$ must be true.

\subsubsection{The case of $\phi\neq \emptyset$}
Suppose that $\theta^\ast=0$. Then in this case, we have $\mv{B}_1^\ast=-\lambda^\ast\mv{H}^\ast+\sum_{k\in \phi}(\beta_k^\ast\gamma_e+\alpha_k^\ast \zeta)\mv{G}_k$. Since $\sum_{k\in \phi}(\beta_k^\ast\gamma_e+\alpha_k^\ast \zeta)\mv{G}_k\succeq \mv{0}$ and $\lambda^\ast>0$, to guarantee that $\mv{B}_1^\ast\preceq \mv{0}$, it requires that any $\mv{x}\in \mathbb{C}^{M\times 1}$ that lies in the null space of $\mv{H}$ must also be in the null space of $\mv{G}_k$, $\forall k\in \phi$; however, this cannot be true since all the channels $\mv{h}$ and $\mv{g}_k$'s are assumed to be linearly independent. Thus, in this case, we also conclude that $\theta^\ast>0$.

By combining the above two cases, it follows that $\theta^\ast>0$. Lemma \ref{lemma7} is thus proved. \vspace{-15pt}

\subsection{Proof of Proposition \ref{proposition4}}\label{appendix7}

The Karush-Kuhn-Tucker (KKT) conditions of problem (P1.1-SDR-Eqv) are expressed as \vspace{-5pt}
\begin{equation}\label{eqn:kkt1}
\begin{small}\mv{A}_1^\ast\mv{S}^\ast=\mv{0}, ~~~ \mv{B}_1^\ast\mv{Q}^\ast=\mv{0}. \end{small}\vspace{-5pt}
\end{equation}

First, we show that ${\rm rank}(\mv{Q}^\ast)\leq \min(K,M)$. The proof directly follows if $K\geq M$ since ${\rm rank}(\mv{Q}^\ast)\leq M=\min(K,M)$. Thus, in the following we focus on the case of $K<M$.

\begin{lemma}\label{lemma8}
Let $\mv{Y}$ and $\mv{Z}$ be two matrices of the same dimension. It then holds that ${\rm rank}(\mv{Y}+\mv{Z})\geq {\rm rank}(\mv{Y})-{\rm rank}(\mv{Z})$.
\end{lemma}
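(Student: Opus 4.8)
The plan is to obtain this one-sided bound as an immediate consequence of the subadditivity of matrix rank, i.e.\ the fact that ${\rm rank}(\mv{A}+\mv{B})\leq {\rm rank}(\mv{A})+{\rm rank}(\mv{B})$ for any two matrices $\mv{A},\mv{B}$ of the same size. First I would recall (or simply cite) why subadditivity holds: every column of $\mv{A}+\mv{B}$ is the sum of a column of $\mv{A}$ and the corresponding column of $\mv{B}$, so the column space of $\mv{A}+\mv{B}$ is contained in the (Minkowski) sum of the column spaces of $\mv{A}$ and $\mv{B}$, whose dimension is at most ${\rm rank}(\mv{A})+{\rm rank}(\mv{B})$; taking dimensions of column spaces gives the inequality.

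The key step is then a substitution. Apply subadditivity with $\mv{A}=\mv{Y}+\mv{Z}$ and $\mv{B}=-\mv{Z}$. Since $\mv{A}+\mv{B}=\mv{Y}$ and ${\rm rank}(-\mv{Z})={\rm rank}(\mv{Z})$, this gives ${\rm rank}(\mv{Y})\leq {\rm rank}(\mv{Y}+\mv{Z})+{\rm rank}(\mv{Z})$, and rearranging yields exactly ${\rm rank}(\mv{Y}+\mv{Z})\geq {\rm rank}(\mv{Y})-{\rm rank}(\mv{Z})$, as claimed.

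There is essentially no obstacle here; the only point to be careful about is logical circularity — one should not invoke a ``reverse triangle inequality for rank'' as if it were the statement being proved. The genuinely prior fact is subadditivity, and the desired inequality follows from it purely by the $\mv{B}=-\mv{Z}$ trick. Once established, Lemma \ref{lemma8} will be used in the remainder of the proof of Proposition \ref{proposition4}: writing the relevant matrix as a sum of matrices whose ranks are individually controlled lets one bound ${\rm rank}(\mv{Q}^\ast)$ from below, which, combined with the complementary dimension count coming from the KKT conditions in (\ref{eqn:kkt1}), yields ${\rm rank}(\mv{Q}^\ast)\leq \min(K,M)$.
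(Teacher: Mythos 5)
Your proof is correct and is essentially identical to the paper's: both apply rank subadditivity to the pair $(\mv{Y}+\mv{Z},-\mv{Z})$ and use ${\rm rank}(-\mv{Z})={\rm rank}(\mv{Z})$ to rearrange. (A minor aside on your closing remark: in Proposition \ref{proposition4} the lemma is used to bound ${\rm rank}(\mv{B}_1^\ast)$ from \emph{below}, which via the KKT condition $\mv{B}_1^\ast\mv{Q}^\ast=\mv{0}$ yields the \emph{upper} bound ${\rm rank}(\mv{Q}^\ast)\leq M-{\rm rank}(\mv{B}_1^\ast)\leq K$, not a lower bound on ${\rm rank}(\mv{Q}^\ast)$.)
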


\begin{proof}
It is known that ${\rm rank}(\mv{Y})+{\rm rank}(\mv{Z}) \geq {\rm rank}(\mv{Y}+\mv{Z})$ if $\mv{Y}$ and $\mv{Z}$ are of the same dimension. Then we have ${\rm rank}(\mv{Y+Z})+{\rm rank}(\mv{-Z}) \geq {\rm rank}(\mv{Y})$. Since ${\rm rank}(\mv{Z})={\rm rank}(-\mv{Z})$, Lemma \ref{lemma8} is proved.
\end{proof}

Define $\mv{C}_1^\ast=-\lambda^\ast \mv{H}-\theta^\ast\mv{I}$. Since according to Lemma \ref{lemma7} we have $\lambda^\ast>0$ and $\theta^\ast>0$, it follows that $\mv{C}_1^\ast\prec \mv{0}$ and thus ${\rm rank}(\mv{C}_1^\ast)=M$. Furthermore, $\mv{B}_1^\ast$ can be expressed as $\mv{B}_1^\ast=\mv{C}_1^\ast+\sum_{k=1}^K(\beta_k^\ast\gamma_e+\alpha_k^\ast\zeta)\mv{G}_k$. As a result, according to Lemma \ref{lemma8} we have
\begin{align}
\begin{small}{\rm rank}(\mv{B}_1^\ast)\end{small}& \begin{small}\geq {\rm rank}(\mv{C}_1^\ast)-{\rm rank}\left(\sum\limits_{k=1}^K(\beta_k^\ast\gamma_e+\alpha_k^\ast\zeta)\mv{G}_k\right)\end{small} \nonumber \\ & \begin{small} \overset{(a)}{\geq}M-K, \end{small} \label{eqn:ra2}
\end{align}where $(a)$ is due to the fact that ${\rm rank}\left(\sum_{k=1}^K(\beta_k^\ast\gamma_e+\alpha_k^\ast\zeta)\mv{G}_k\right)\leq K$. According to (\ref{eqn:kkt1}), $\mv{Q}^\ast$ must lie in the null space of $\mv{B}_1^\ast$. Therefore, if $K<M$, ${\rm rank}(\mv{Q}^\ast)\leq M-{\rm rank}(\mv{B}_1^\ast) \leq K$. By combining the above two cases of $K\geq M$ and $K<M$, it follows that ${\rm rank}(\mv{Q}^\ast)\leq \min(K,M)$. The first part of Proposition \ref{proposition4} is thus proved.

Next, we prove the second part of Proposition \ref{proposition4}. Define \begin{align}\begin{small}\mv{D}_1^\ast\end{small} & \begin{small}=-\lambda^\ast\mv{H}-\sum_{k=1}^K\beta_k^\ast\mv{G}_k+\sum_{k=1}^K\alpha_k^\ast\zeta\mv{G}_k-\theta^\ast\mv{I}\end{small} \nonumber \\ & \begin{small}=\mv{B}_1^\ast-\sum\limits_{k=1}^K\left(1+\gamma_e \right)\beta_k^\ast \mv{G}_k. \end{small} \label{eqn:D1} \end{align}Then we have \begin{align}\label{eqn:new A2}\begin{small} \mv{A}_1^\ast=\mv{D}_1^\ast+(1+\lambda^\ast)\mv{H}. \end{small} \end{align}Define $l_1={\rm rank}(\mv{D}_1^\ast)$. If $l_1=M$, then we can conclude that ${\rm rank}(\mv{A}_1^\ast)\geq M-1$ according to (\ref{eqn:new A2}) and Lemma \ref{lemma8}. However, if ${\rm rank}(\mv{A}_1^\ast)=M$, then according to (\ref{eqn:kkt1}) it follows that $\mv{S}^\ast=\mv{0}$, which cannot be the optimal solution to (P1-SDR-Eqv). Therefore, we have ${\rm rank}(\mv{A}_1^\ast)=M-1$ and thus $\mv{S}^\ast=b\mv{\tau}_1\mv{\tau}_1^H$ if $l_1=M$, where $\mv{\tau}_1$ spans the null space of $\mv{A}_1^\ast$. Next, we consider the case where $\mv{D}_1^\ast$ is not full-rank, i.e., $l_1<M$. In this case, let $\mv{\Pi}_1\in \mathbb{C}^{M\times (M-l_1)}$ with $\mv{\Pi}_1^H\mv{\Pi}_1=\mv{I}$ denote the orthogonal basis for the null space of $\mv{D}_1^\ast$, i.e., $\mv{D}_1^\ast\mv{\Pi}_1=\mv{0}$. Let $\mv{\pi}_{1,n}$ denote the $n$th column of $\mv{\Pi}_1$, $1\leq n \leq M-l_1$. Then we have
\begin{align}
\mv{\pi}_{1,n}^H\mv{A}_1^\ast\mv{\pi}_{1,n} & =\mv{\pi}_{1,n}^H\left(\mv{D}_1^\ast+(1+\lambda^\ast)\mv{H}\right)\mv{\pi}_{1,n}\nonumber \\ & =(1+\lambda^\ast)|\mv{h}^H\mv{\pi}_{1,n}|^2, ~ 1\leq n \leq M-l_1.
\end{align}Since $\mv{A}_1^\ast\preceq \mv{0}$ and $1+\lambda^\ast>0$, it follows that $|\mv{h}^H\mv{\pi}_{1,n}|^2=0$, $\forall n$, or \begin{align}\label{eqn:null space of H}\begin{small}\mv{H}\mv{\Pi}_1=\mv{0}. \end{small} \end{align}As a result, we have
\begin{align}\label{eqn:X is optimal}
\begin{small} \mv{A}_1^\ast\mv{\Pi}_1=\left(\mv{D}_1^\ast+(1+\lambda^\ast)\mv{H}\right)\mv{\Pi}_1=\mv{0}. \end{small}
\end{align}Moreover, according to (\ref{eqn:new A2}) and Lemma \ref{lemma8}, we have
\begin{align}
\begin{small}{\rm rank}(\mv{A}_1^\ast)\geq {\rm rank}(\mv{D}_1^\ast)-{\rm rank}\left((1+\lambda^\ast)\mv{H}\right)=l_1-1. \end{small}
\end{align}Let $\mv{\Omega}_1$ denote the orthogonal basis for the null space of $\mv{A}_1^\ast$, it then follows that\begin{align}\label{eqn:rank up}\begin{small}{\rm rank}(\mv{\Omega}_1)=M-{\rm rank}(\mv{A}_1^\ast)\leq M-l_1+1.\end{small}\end{align}Next, we show that ${\rm rank}(\mv{\Omega}_1)=M-l_1+1$. According to (\ref{eqn:X is optimal}), $\mv{\Pi}_1$ spans $M-l_1$ orthogonal dimensions of the null space of $\mv{A}_1^\ast$, i.e., ${\rm rank}(\mv{\Omega}_1)\geq M-l_1$. Suppose that ${\rm rank}(\mv{\Omega}_1)=M-l_1$; then we have $\mv{\Omega}_1=\mv{\Pi}_1$. According to (\ref{eqn:condition1}) and (\ref{eqn:kkt1}), $\mv{S}^\ast$ can be expressed as $\mv{S}^\ast=\sum_{n=1}^{M-l_1} a_n\mv{\pi}_{1,n}\mv{\pi}_{1,n}^H$, where $a_n\geq 0$, $\forall n$. However, in this case, no information is transferred to IR since according to (\ref{eqn:null space of H}), $\mv{\pi}_{1,n}$'s all lie in the null space of $\mv{H}$. As a result, according to (\ref{eqn:rank up}) there exists only one single subspace spanned by $\mv{\tau}_1\in\mathbb{C}^{M\times 1}$ of unit norm, which lies in the null space of $\mv{A}_1^\ast$, i.e., $\mv{A}_1^\ast\mv{\tau}_1=\mv{0}$, and is orthogonal to the span of $\mv{\Pi}_1$, i.e., $\mv{\Pi}_1^H\mv{\tau}_1=\mv{0}$. To summarize, we have
\begin{align}
\begin{small}\mv{\Omega}_1=[\mv{\Pi}_1 \ \mv{\tau}_1], \end{small}
\end{align}and thus ${\rm rank}(\mv{\Omega}_1)=M-l_1+1$. Moreover, according to (\ref{eqn:condition1}) and (\ref{eqn:kkt1}), any optimal solution $\mv{S}^\ast$ to problem (P1.1-SDR-Eqv) can be expressed as $\mv{S}^\ast=\sum_{n=1}^{M-l_1}a_n\mv{\pi}_{1,n}\mv{\pi}_{1,n}^H+b\mv{\tau}_1\mv{\tau}_1^H$, where $a_n\geq 0$, $\forall n$, and $b>0$. The second part of Proposition \ref{proposition4} is thus proved.

Last, we prove the third part of Proposition \ref{proposition4}. Suppose that $(\mv{S}^\ast,\mv{Q}^\ast,t^\ast)$ is an optimal solution to problem (P1.1-SDR-Eqv), where $\mv{S}^\ast$ is given in (\ref{eqn:feasible rank}) and ${\rm rank}(\mv{S}^\ast)>1$. Then consider the new solution $(\bar{\mv{S}}^\ast,\bar{\mv{Q}}^\ast,\bar{t}^\ast)$ given in (\ref{eqn:new S})-(\ref{eqn:new t}). It can be shown that with this new solution we have
\begin{align}
& \begin{small}{\rm Tr}(\mv{H}\bar{\mv{S}}^\ast)={\rm Tr}\left(\mv{H}\left(\mv{S}^\ast-\sum\limits_{n=1}^{M-l_1}a_n\mv{\pi}_{1,n}\mv{\pi}_{1,n}^H\right)\right) ={\rm Tr}(\mv{H}\mv{S}^\ast), \end{small}\label{eqn1} \\
& \begin{small}{\rm Tr}(\mv{H}\bar{\mv{Q}}^\ast)+\bar{t}^\ast\sigma_0^2={\rm Tr}\left(\mv{H}\left(\mv{Q}^\ast+\sum\limits_{n=1}^{M-l_1}a_n\mv{\pi}_{1,n}\mv{\pi}_{1,n}^H\right)\right) +t^\ast \sigma_0^2 \end{small} \nonumber \\ & \ \ \ \ \ \ \ \ \ \ \ \ \ \ \ \ \ \ \ \begin{small}={\rm Tr}(\mv{H}\mv{Q}^\ast)+t^\ast\sigma_0^2=1, \end{small} \label{eqn2} \\
& \begin{small}{\rm Tr}(\mv{G}_k\bar{\mv{S}}^\ast)\leq {\rm Tr}(\mv{G}_k\mv{S}^\ast) \leq \gamma_e({\rm Tr}(\mv{G}_k\mv{Q}^\ast)+t^\ast\sigma_k^2) \end{small} \nonumber \\ & \ \ \ \ \ \ \ \ \ \ \ \ \begin{small}\leq \gamma_e({\rm Tr}(\mv{G}_k\bar{\mv{Q}}^\ast)+\bar{t}^\ast \sigma_k^2), ~~~ \forall k, \end{small} \label{eqn3} \\
& \begin{small} \zeta({\rm Tr}(\mv{G}_k\bar{\mv{S}}^\ast)+{\rm Tr}(\mv{G}_k\bar{\mv{Q}}^\ast))=\zeta({\rm Tr}(\mv{G}_k\mv{S}^\ast)+{\rm Tr}(\mv{G}_k\mv{Q}^\ast))\end{small} \nonumber \\ & \ \ \ \ \ \ \ \ \ \ \ \ \ \ \ \ \ \ \ \ \ \ \ \ \ \ \ \ \ \ \begin{small} \geq \bar{t}^\ast \bar{E}_k, ~~~ \forall k, \end{small}\label{eqn4} \\
& \begin{small} {\rm Tr}(\bar{\mv{S}}^\ast)+{\rm Tr}(\bar{\mv{Q}}^\ast)={\rm Tr}(\mv{S}^\ast)+{\rm Tr}(\mv{Q}^\ast)\leq \bar{t}^\ast \bar{P}, \end{small} \label{eqn5} \\
& \begin{small} \bar{\mv{S}}^\ast \succeq \mv{0}, ~~~ \bar{\mv{Q}}^\ast \succeq \mv{0}, ~~~ \bar{t}^\ast>0. \end{small} \label{eqn6}
\end{align}(\ref{eqn1}) indicates that the new solution $(\bar{\mv{S}}^\ast,\bar{\mv{Q}}^\ast,t^\ast)$ can achieve the same optimal value of (P1.1-SDR-Eqv), while (\ref{eqn2})-(\ref{eqn6}) imply that the new solution satisfies all the constraints of (P1.1-SDR-Eqv). Thus, $(\bar{\mv{S}}^\ast,\bar{\mv{Q}}^\ast,t^\ast)$ is also an optimal solution to (P1.1-SDR-Eqv), with ${\rm rank}(\bar{\mv{S}}^\ast)=1$.

Proposition \ref{proposition4} is thus proved. \vspace{-15pt}

\subsection{Proof of Proposition \ref{proposition3}}\label{appendix5}

According to Proposition \ref{proposition4}, if $l_1={\rm rank}(\mv{D}_1^\ast)=M$, then $\mv{S}^\ast=b\mv{\tau}_1\mv{\tau}_1^H$, and it thus follows that ${\rm rank}(\mv{S}^\ast)=1$ is always true for (P1.1-SDR-Eqv). Moreover, since $\mv{B}_1^\ast \preceq \mv{0}$ and $-\sum_{k=1}^K(1+\gamma_e)\beta_k^\ast\mv{G}_k\preceq \mv{0}$, we have $\mv{D}_1^\ast\preceq \mv{0}$ according to (\ref{eqn:D1}). As a result, to show ${\rm rank}(\mv{D}_1^\ast)=M$, it is sufficient to verify that the maximum eigenvalue of $\mv{D}_1^\ast$ is negative, i.e., $\mv{D}_1^\ast\prec \mv{0}$. Therefore, in the following we show by contradict that if there is no non-zero solution to the equations given in (\ref{eqn:sufficient condition}), then the maximum eigenvalue of $\mv{D}_1^\ast$ must be negative.

Since $\mv{D}_1^\ast\preceq \mv{0}$, its maximum eigenvalue can be either zero or negative. Suppose that the maximum eigenvalue of $\mv{D}_1^\ast$ is zero. Then there exists at least an $\mv{x}\in \mathbb{C}^{M\times 1}\neq \mv{0}$ such that $\mv{x}^H\mv{D}_1^\ast\mv{x}=0$. Since $\mv{B}_1^\ast \preceq \mv{0}$ and $-\sum_{k=1}^K(1+\gamma_e)\beta_k^\ast\mv{G}_k\preceq \mv{0}$, according to (\ref{eqn:D1}) we have \vspace{-7pt}
\begin{align}
& \begin{small}\mv{x}^H\mv{B}_1^\ast\mv{x}=0, \end{small}\label{eqn:1} \\
& \begin{small}\mv{x}^H\sum\limits_{k=1}^K\left(1+\gamma_e\right)\beta_k^\ast\mv{G}_k\mv{x}=0. \end{small}\label{eqn:2}
\end{align}From (\ref{eqn:2}), we have \vspace{-5pt} \begin{align}\label{eqn:tight}\begin{small}\mv{x}^H\mv{G}_k\mv{x}=0, ~~~ {\rm if} \ k\in \bar{\Psi}, \end{small}\end{align}where $\bar{\Psi}$ is given in (\ref{psi2}). Note that (\ref{eqn:tight}) is equivalent to $\mv{G}_k\mv{x}=\mv{0}$, $\forall k\in \bar{\Psi}$, since $\mv{G}_k\succeq \mv{0}$. Moreover, since $\mv{A}_1^\ast \preceq \mv{0}$ and $\lambda^\ast>0$ according to Lemma \ref{lemma7}, it follows from (\ref{eqn:new A2}) that \vspace{-15pt}
\begin{align}\begin{small}\label{eqn:null space}\mv{x}^H\mv{H}\mv{x}=0.\end{small}\end{align}Note that (\ref{eqn:null space}) is equivalent to $\mv{H}\mv{x}=\mv{0}$ since $\mv{H}\succeq \mv{0}$. Thus, from (\ref{eqn:1})-(\ref{eqn:null space}), we have
\begin{align}
\begin{small}\mv{x}^H\mv{B}_1^\ast\mv{x}\end{small}&\begin{small}=\mv{x}^H\left(-\lambda^\ast\mv{H}+\sum\limits_{k=1}^K\beta_k^\ast\gamma_e\mv{G}_k+\sum\limits_{k=1}^K\alpha_k^\ast\zeta\mv{G}_k-\theta^\ast \mv{I}\right)\mv{x}\nonumber \end{small}\\ & \begin{small}=\mv{x}^H\left(\sum\limits_{k\in \Psi}\alpha_k^\ast\zeta\mv{G}_k-\theta^\ast\mv{I}\right)\mv{x}=0.\end{small} \label{eqn:negative definite}
\end{align}

To summarize, if there is no non-zero solution $\mv{x}\in \mathbb{C}^{M\times 1}$ to the equations given in (\ref{eqn:sufficient condition}), then (\ref{eqn:tight}), (\ref{eqn:null space}) and (\ref{eqn:negative definite}) cannot be satisfied at the same time, and it thus follows that the maximum eigenvalue of $\mv{D}_1^\ast$ cannot be zero, i.e., ${\rm rank}(\mv{D}_1^\ast)=M$. Then according to Proposition \ref{proposition4}, ${\rm rank}(\mv{S}^\ast)=1$ is always true for (P1.1-SDR-Eqv). Proposition \ref{proposition3} is thus proved.

\end{appendix}



\begin{thebibliography}{1}
\bibliographystyle{IEEEbib}
\bibitem{Sahai10} P. Grover and A. Sahai, ``Shannon meets Tesla: wireless information and power
transfer,'' in {\it Proc. IEEE Int. Symp. Inf. Theory (ISIT)}, pp.
2363-2367, June 2010.

\bibitem{Rui11} R. Zhang and C. K. Ho, ``MIMO broadcasting for simultaneous wireless information and power
transfer,'' {\it IEEE Trans. Wireless Commun.}, vol. 12, no. 5, pp. 1989-2001, May 2013.

\bibitem{Liang13} L. Liu, R. Zhang, and K. C. Chua, ``Wireless information transfer with opportunistic energy harvesting,'' {\it IEEE Trans. Wireless Commun.}, vol. 12, no. 1, pp. 288-300, Jan. 2013.

\bibitem{Liang13TCOM} L. Liu, R. Zhang, and K. C. Chua, ``Wireless information and power transfer: a dynamic power splitting approach,'' {\it IEEE Trans. Commun.}, vol. 61, no. 9, pp. 3990-4001, Sep. 2013.

\bibitem{Rui12} X. Zhou, R. Zhang, and C. Ho, ``Wireless information and power transfer: architecture design and rate-energy tradeoff,'' {\it IEEE Trans. Commun.}, vol. 61, no. 11, pp. 4757-4767, Nov. 2013.

\bibitem{Simeone12} A. M. Fouladgar and O. Simeone, ``On the transfer of information and energy in multi-user systems,'' {\it IEEE Commun. Letters.}, vol. 16, no. 1, pp. 1733-1736, Nov. 2012.

\bibitem{Tao12} Z. Xiang and M. Tao, ``Robust beamforming for wireless information and power transmission,'' {\it IEEE Wireless Commun. Letters}, vol. 1, no. 4, pp. 372-375, Aug. 2012.

\bibitem{Rui13WCNC} H. Ju and R. Zhang, ``A novel mode switching scheme utilizing random beamforming for opportunistic energy harvesting,'' in {\it Proc. IEEE Wireless Communications and Networking Conference (WCNC)}, Apr. 2013.

\bibitem{Rui13} J. Xu, L. Liu, and R. Zhang, ``Multiuser MISO beamforming for simultaneous wireless information and power transfer,'' in {\it Proc. IEEE International Conference on Acoustics, Speech, and Signal Processing (ICASSP)}, 2013.

\bibitem{Amin12} B. K. Chalise, Y. D. Zhang, and M. G. Amin, ``Energy harvesting in an OSTBC based amplify-and-forward MIMO relay system,'' in {\it Proc. IEEE International Conference on Acoustics, Speech, and Signal Processing (ICASSP)}, 2012.

\bibitem{Ottersten13} S. Timotheou, I. Krikidis, and B. Ottersten, ``MISO interference channel with QoS and RF energy harvesting constraints,'' in {\it Proc. IEEE Int. Conf. on Commun. (ICC)}, 2013.

\bibitem{Shen12} C. Shen, W. C. Li, and T. H. Chang, ``Simultaneous information and energy transfer: a two-user MISO interference channel case,'' in {\it Proc. IEEE Global Commun. Conf. (GLOBECOM)}, Dec. 2012.

\bibitem{Clerckx13} J. Park and B. Clerckx, ``Joint wireless information and energy transfer in a two-user MIMO interference channel,'' {\it IEEE Trans. on Wireless Commun}, vol. 12, no. 8, pp. 4210-4221, Aug. 2013.

\bibitem{Luo10} Z. Q. Luo, W. K. Ma, A. M. C. So, Y. Ye, and S. Zhang, ``Semidefinite relaxation of quadratic optimization problems,'' {\it IEEE Signal Process.
Mag.}, vol. 27, no. 3, pp. 20-34, May, 2010.

\bibitem{Goel06} S. Goel and R. Negi, ``Guaranteeing secrecy using artificial noise,'' {\it IEEE Trans. Wireless Commun.}, vol. 7, no. 6, pp. 2180-2189, June 2008.

\bibitem{Khisti07} A. Khisti, G. Wornell, A. Wiesel, and Y. Eldar, ``On the Gaussian MIMO wiretap channel,'' in {\it Proc. IEEE Int. Symp. Inf. Theory (ISIT)} pp. 2471-2475, Jun. 2007.

\bibitem{Swindlehurst12} J. Huang and A. L. Swindlehurst, ``Robust secure transmission in MISO
channels based on worst-case optimization'', {\it IEEE Trans. Signal Process.}, vol. 60, no. 4, pp. 1696-1707, Apr. 2012.

\bibitem{Jorswieck10} S. Gerbracht, A. Wolf, and E. A. Jorswieck, ``Beamforming for fading
wiretap channels with partial channel information,'' in {\it Proc. Int.
ITG Workshop on Smart Antennas (WSA)}, Bremen, Germany, Feb. 2010.

\bibitem{Mckay10} X. Zhou and M. R. McKay, ``Secure transmission with artificial noise
over fading channels: achievable rate and optimal power allocation,'' {\it IEEE Trans. Veh. Technol.}, vol. 59, no. 8, pp. 3831-3842, Oct. 2010.

\bibitem{MaKent11} Q. Li and W. K. Ma, ``Spatially selective artificial-noise aided transmit optimization for MISO multi-eves secrecy rate maximization,'' {\it IEEE Trans. Signal Process.}, vol. 61, no. 10, pp. 2704-2717, May, 2013.

\bibitem{Ma11} W. C. Liao, T. H. Chang, W. K. Ma, and C. Y. Chi, ``QoS-based transmit beamforming in the presence of eavesdroppers: An artificial-noise-aided approach,'' {\it IEEE Trans. Signal Process.}, vol. 59, no. 3, pp. 1202-1216, Mar. 2011.

\bibitem{Rui09} L. Zhang, Y. C, Liang, Y. Pei, and R. Zhang, ``Robust beamforming design: from cognitive radio MISO channels to secrecy MISO channels,'' in {\it Proc. IEEE Global Commun. Conf. (GLOBECOM)}, Dec. 2009.

\bibitem{Zhang07} N. Sklavos and X. Zhang (Ed.), {\it Wireless Security and Cryptography: Specifications and Implementations}, CRC Press, Boca Raton, FL, 2007.

\bibitem{Poor07} Y. Liang, G. Kramer, H. V. Poor, and S. Shamai (Shitz), ``Compound wire-tap channels,'' in {\it Proc. 45th Ann. Allerton Conf. Commun., Contr., Comput.}, pp. 136-143, Sep. 2007.

\bibitem{Boyd04} S. Boyd and L. Vandenberghe, {\it Convex Optimization}, Cambridge University, 2004.

\bibitem{Boyd11} M. Grant and S. Boyd, CVX: {\it Matlab software for disciplined convex programming, version 1.21}, http://cvxr.com/cvx/ Apr. 2011.

\bibitem{Rui10} L. Zhang, R. Zhang, Y. C. Liang, Y. Xin, and S. Cui, ``On the relationship between the multi-antenna secrescy communications and cognitive radio communications,'' {\it IEEE Trans. Commun.}, vol. 58, no. 6, pp. 1877-1886, June, 2010.

\bibitem{Cooper62} A. Charnes and W. W. Cooper, ``Programming with linear fractional functions,'' {\it Naval Res. Logist. Quarter.}, vol. 9, pp. 181-186, Dec. 1962.

\bibitem{Ma10} T. H. Chang, C. W. Hsin, W. K. Ma, and C. Y. Chi, ``A linear fractional semidefinite relaxation approach to maximum likelihood detection of higher-order QAM OSTBC in unknown channels,'' {\it IEEE Trans. Singal Process.}, vol. 58, no. 4, pp. 2315-2326, Apr. 2010.

\bibitem{Palomar10}Y. Huang and D. P. Palomar, ``Rank-constrained separable semidefinite program with applications to optimal beamforming,'' {\it IEEE Trans. Signal Process.}, vol. 58, no. 2, pp. 664-678, Feb. 2010.

\bibitem{Beck06} A. Beck and Y. C. Eldar, ``Strong duality in nonconvex quadratic optimization with two quadratic constraints,'' {\it SIAM J. Optim.}, vol. 17, no. 3, pp. 844-860, 2006.

\bibitem{Zhangshuzhong07} Y. Huang and S. Zhang, ``Complex matrix decomposition and quadratic programming,'' {\it Math. Oper. Res.}, vol. 32, no. 3, pp. 758-768, Aug. 2007.

\bibitem{Zhangshuzhong09} W. Ai, Y. Huang, and S. Zhang, ``New results on Hermitian matrix rank-one decomposition,'' {\it Math. Program: Ser. A}, vol. 128, no. 1-2, pp. 253-283, Jun. 2011.


\bibitem{Liu10} R. Liu, T. Liu, H. V. Poor, and S. Shamai, ``Multiple-input multiple-output Gaussian broadcast channels with confidential messages,'' {\it IEEE Trans. Inf. Theory}, vol. 56, no. 9, pp. 4215-4227, Sep. 2010.

\bibitem{Oggier11} F. Oggier and B. Hassibi, ``The secrecy capacity of the MIMO wiretap channel,'' {\it IEEE Trans. Inf. Theory}, vol. 57, no. 8, pp. 4961-4972, Aug. 2011.

\bibitem{Ulukus07} S. Shafiee and S. Ulukus, ``Achievable rates in Gaussian MISO channels
with secrecy constraints,'' in {\it Proc. IEEE Int. Symp. Inf. Theory (ISIT)}, pp. 2466-2470, June 2007.

\bibitem{Luo07} E. Karipidis, N. D. Sidiropoulos, and Z. Q. Luo, ``Far-field multicast beamforming for uniform linear antenna arrays,'' {\it IEEE Trans. Signal Process.,} vol. 55, no. 10, pp. 4916-4927, Oct. 2007.

\end{thebibliography}
\end{document}